\newcommand{\bff}{\mathbf f}
\newcommand{\bg}{\mathbf g}
\newcommand{\by}{\mathbf y}
\newcommand{\bw}{\mathbf w}
\newcommand{\bn}{\mathbf n}
\newcommand{\bt}{\mathbf t}
\newcommand{\bx}{\mathbf x}
\newcommand{\bs}{\mathbf s}
\newcommand{\bD}{\mathbf{D}}
\newcommand{\bE}{\mathbf{E}}
\newcommand{\bG}{\mathbf{G}}
\newcommand{\bH}{\mathbf{H}}
\newcommand{\bW}{\mathbf{W}}
\newcommand{\bX}{\mathbf{X}}
\newcommand{\bY}{\mathbf{Y}}
\newcommand{\removelatexerror}{\let\@latex@error\@gobble}
\author{Adrian\ Schad \quad \quad Ka L. Law \quad \quad Marius Pesavento
}
\title{Rank-Two Beamforming and Power Allocation in Multicasting Relay Networks}
\begin{document}

\maketitle

\begin{abstract}
In this paper, we propose a novel single-group multicasting relay beamforming
scheme. We
assume a
source that transmits common messages via multiple
amplify-and-forward relays to multiple destinations. To increase the
number of
degrees of freedom in the beamforming design, the relays process two received
signals
jointly and transmit the
Alamouti
space-time block code over two different beams. Furthermore, in  contrast to the
existing
relay
multicasting scheme of the literature, we take into account the direct links
from the source to the destinations. We aim to
maximize the lowest received quality-of-service by choosing the proper relay
weights and the ideal distribution of the power resources in the network.  To
solve the corresponding
optimization problem, we propose an iterative algorithm which solves
sequences of convex approximations of the original non-convex optimization
problem.
Simulation results
demonstrate significant performance improvements of the proposed methods as
compared with
 the existing relay
multicasting scheme of the literature and an algorithm based on the popular
semidefinite relaxation technique.

\end{abstract}

\begin{keywords}
Amplify-and-forward, concave-convex programming, multicasting, rank-two
beamforming, semidefinite relaxation.
\end{keywords}

\section{INTRODUCTION}
``Beamforming is a versatile and powerful approach to receive,
transmit, or relay signals-of-interest in a spatially selective
way in the presence of interference and noise'' \cite{Exle}.
Recently, the concept of  receive and transmit
beamforming has been used to enhance coverage and data rate performance in
amplify-and-forward (AF) relay networks. Distributed beamforming has been
applied as
a coherent transceiver technique in a variety of relay network architectures,
such as single-user networks \cite{jafar_netw_subm}--\cite{Chen08}. In such
networks one
source transmits data to one destination. Single-user networks have been
extended to peer-to-peer networks, where multiple
source-destination pairs communicate directly via relays
\cite{fazeli08}, \cite{MultiUserTwoWay}, \cite{Schad2}.
Moreover, distributed systems of non-connected relays can be used for
multicasting to
transmit data from one source to many destinations simultaneously to
avoid exhaustive individual transmissions \cite{Bornhorst}--\cite{12}.

The task to select the optimum antenna
weights in centralized beamforming systems using a connected antenna array
\cite{TransmitBeamforming}--\cite{ERt}, or in distributed beamforming
systems using a non-connected array
\cite{Bornhorst}--\cite{12} for multicasting is
highly non-trivial as it requires to form beams towards several destinations,
each corresponding to a different spatial signature. In
\cite{TransmitBeamforming}, it has
been shown that the problem of selecting the antenna weights for single-group
multicasting is NP-hard. Until now, no polynomial time algorithm for NP-hard
problems
is known and it is expected that exact solutions can only be computed
within exponential time.

To derive adequate solutions to beamforming
problems which belong to the class of
non-convex quadratically constrained quadratic optimization problems (QCQPs),
computationally efficient algorithms have been proposed which approximate the
feasible set of the optimization problem
\cite{MultiUserTwoWay}--\cite{Eusipco},
\cite{Schad2}--\cite{9}, \cite{ERt}--\cite{Dartman}.

Outer approximation techniques replace the QCQPs
by convex semidefinite programs (SDPs) which can be solved efficiently
\cite{Schad2}--\cite{9}.
In the latter approach, the weight vector is replaced by a positive
semidefinite Hermitian matrix. Since
this so-called semidefinite relaxation (SDR) technique extends the feasible
region, a solution matrix to the SDR problem lies not necessarily in the
feasible set of the original problem. If the rank
$\mathcal{R}(\mathbf{X}^\star)$ of the solution
matrix $\mathbf{X}^\star$ equals to
one,
the SDR solution is a global solution to the original problem. In
practice, however, $\mathcal{R}(\mathbf{X}^\star)$ might be greater than one
and $\mathbf{X}^\star$ is not feasible for the original problem. Especially in
single-group multicasting
scenarios, where many destinations demand a  minimum received
quality-of-service (QoS), it is not likely that
$\mathcal{R}(\mathbf{X}^\star)=1$. This is a
consequence of the fact that the
existence of an SDR solution matrix with rank $\mathcal{R}(\mathbf{X}^\star)$ is
only
guaranteed if $\mathcal{R}(\mathbf{X}^\star)\geq \mathcal{O}(\sqrt{M})$, where
$M$ is the number
of destinations \cite{SDPRelax}. If $\mathcal{R}(\mathbf{X}^\star)>1$, the
objective value of the SDR solution is only a lower bound to the objective value
of the QCQP. The accuracy of the lower bound decreases with a growing
number of destinations $M$ \cite{Rank2}. Therefore, for multicasting with large
$M$, the lower
bound
generated by SDR can lie quite far from the true minimum value for rank-one
beamforming.

\textcolor{black}{Interestingly, the formulations of the optimization problems for single-group multicasting are similar to the multi-user downlink beamforming problems of \cite{MUD1}--\cite{MUD3}. The latter works consider scenarios where each destination receives individual signals and exploit
uplink-downlink duality to derive algorithms that solve joint beamforming and power allocation problems. However, for single-group multicasting problems, these algorithms are not applicable due to the duality gap \cite{TransmitBeamforming}.}

Recently, in the two independent works \cite{Rank2} and \cite{Xin},
rank-two transmit beamforming techniques for multicasting networks have been
proposed in
which also {\it rank-two} SDR solution matrices are feasible.
In
these techniques, two weight vectors are used at the transmitter to process
two data symbols jointly. Rank-two beamforming techniques have gained much
interest in the current research as the system performance is enhanced due to
the
increased number of degrees of freedom resulting from the additional weight
vector \cite{Eusipco}, \cite{Rank2}--\cite{9}. \textcolor{black}{ The gain in performance comes at
virtually no additional cost of decoding at
the receiver and symbol-by-symbol detection can be applied by utilizing Alamouti's orthogonal space-time block coding (OSTBC)
\cite{Alamouti}. Note that Alamouti's code has been further developed in \cite{Tarokh}--\cite{Jafarkhani5}. Moreover, it has been combined with a variety of signal processing techniques for multi-antenna systems, as for instance,  multi-user detection \cite{MUD}, \cite{Reynolds}, transmit beamforming  with limited channel feedback \cite{Zhou1}, \cite{Zhou2}, receive beamforming \cite{Sun}, and interference alignment \cite{li11}, \cite{Zaki}.}

In this paper, we propose a distributed rank-two beamforming scheme for single-group multicasting using a
network of amplify-and-forward (AF) relays. In
AF multicasting
networks, the relays forward common messages from a single source to
multiple destinations. The proposed AF single-group multicasting
scheme (AFMS) is a
non-trivial extension of the
transmit beamforming technique
of \cite{Rank2} and \cite{Xin} to a distributed beamforming system. \textcolor{black}{ In such a system, the retransmission of the noise at the relays  generally leads to correlated noise at the destinations, even when OSTBCs are used \cite{DistributedSpaceTime1}. Then, symbol-by-symbol detection is not optimal. Here, we design a distributed beamforming scheme which achieves uncorrelated destination noise.} We
refer
to our scheme as the Rank-2-AFMS in
distinction to the conventional Rank-1-AFMS of \cite{Bornhorst}.
As another generalization to the Rank-1-AFMS,
we
exploit direct link connections from the source to the destinations.

As the design criterion to select the power at the source and the relay weights,
we
aim to maximize the minimum
QoS at the destinations under constraints on the
transmit power in the network.  We consider
constraints on the maximum transmit power of the source, on the individual
power of every relay, on the sum power of the relays, and on the total power by
both the relays and source.

To solve the non-convex max-min fairness optimization problem of jointly
determining
the relay weight vector and the power split between the relays and the source,
we
propose a linearization-based iterative algorithm. \textcolor{black}{ This algorithm belongs
to the class
of concave-convex procedure (CCCP) algorithms \cite{CCCP}.} In advantage to the SDR
technique, where the number of variables is
roughly squared, the number of
variables is not increased for CCCP algorithms. CCCP
algorithms are used to approximately solve non-convex
{ difference of convex (DC)
programming problems.} Many optimization problems that arise in the context of
wireless communications are DC problems, including power allocation
\cite{Phan}--\cite{Yong} and beamforming problems \cite{1}--\cite{30}.
In contrast to the algorithms for max-min fair beamforming of
\cite{MultiUserTwoWay}, \cite{Schad2},
\cite{multigroupMulticast}, \cite{30}, and \cite{Dartman2}, which treat solely
the optimization of beamforming vectors, our
algorithm derives the relay weight vectors and allocates transmit power to
the source and the relays. The max-min fair beamforming
approaches in \cite{Schad2} and \cite{multigroupMulticast} combine the SDR
technique with one-dimensional (1D) search on the maximum QoS. To compare our
CCCP algorithm (Max-Min-CCCP) with the latter SDR-based
algorithms, we combine
the SDR technique with two-dimensional (2D) search on both the maximum QoS and the best
power split between the relays and the source.

To test the Rank-2-AFMS and the Rank-1-AFMS under realistic conditions, we
use the channel model of  \cite{Baum}. The simulation results
demonstrate the performance of the proposed rank-two
scheme combined with the proposed algorithm compared with the rank-one scheme of
 \cite{Bornhorst} and the theoretical bound obtained by SDR. The
Max-Min-CCCP
algorithm outperforms the SDR technique for high destination numbers at a much
lower runtime. Moreover, the Max-Min-CCCP
algorithm offers a good performance-runtime trade-off and achieves a minimum
signal-to-noise ratio (SNR) which is less
than 1 dB lower than the theoretical bound  after three
iterations.

The contribution of this paper can be summarized as follows: \textcolor{black}{
\begin{itemize}
\item The conventional Rank-1-AFMS of \cite{Bornhorst} is generalized to
the Rank-2-AFMS, where the direct link connections from the source to the destinations are
exploited.
\item In extension of the previous works on rank-one beamforming \cite{Bornhorst}, \cite{12}, \cite{TransmitBeamforming}, \cite{multigroupMulticast}, \cite{Silva}-\cite{ERt}, and on rank-two beamforming \cite{Eusipco}, \cite{Rank2}-\cite{9} for multicasting scenarios, we combine the beamformer design with power allocation.
\item The rank-two beamforming technique of \cite{Rank2} and \cite{Xin} is generalized to a distributed beamforming system with a
fundamental difference in the system model.
\item The proposed Rank-2-AFMS enjoys symbol-by-symbol maximum-likelihood (ML) detection due to the uncorrelated destination noise.
\item A rank-two optimization framework is established to develop the CCCP algorithm. Compared with the traditional SDR approach the latter approach has the following advantages:
\begin{itemize}
\item No additional searches for the optimum source power and the highest minimum SNR are required as the CCCP algorithm computes all parameters jointly.
\item The CCCP algorithm converges to a stationary point.
\end{itemize}
\item The simulation results demonstrate that the performance of the proposed system combined with
the proposed CCCP algorithm is close to the theoretical performance bound.
\end{itemize}}

\newcounter{Satz}

{\it Notation:}
${\rm E}\{\cdot\}$, $|\cdot|$, ${\rm tr}(\cdot)$, $(\cdot)^*$, $(\cdot)^T$, $\Re
\{\cdot\}$, and $(\cdot)^H$ denote the statistical expectation,
absolute value of a complex number, trace of a
matrix, complex conjugate, transpose, real part operator, and Hermitian
transpose, respectively. $\bY \succeq 0$ means that $\bY$ is a positive
semidefinite matrix. ${\rm diag}(\mathbf{a})$ denotes a diagonal matrix, with
the entries of the vector $\mathbf{a}$ on its diagonal, ${\rm
blkdiag}([\mathbf{Y_1,Y_2,\dots,Y_m}])$ is the block diagonal matrix formed from
the matrices $\mathbf{Y_1,Y_2,\dots,Y_m} $. $\mathbf{0}_N$  is the $N\times 1$
vector containing zeros  in all entries.
 $\mathbf{O}_N\triangleq {\rm
diag}(\mathbf{0}_N) $ and $\mathbf{I}_N $ is the $N \times N$ identity matrix.
$\mathbf{x} \sim {\cal
N } (\mathbf{a},\mathbf{Y})$ means that $\mathbf{x}$ is circularly symmetric
complex Gaussian distributed with mean $\mathbf{a}$ and covariance matrix
$\mathbf{Y}$. For the complex number $z=|z|e^{j\phi}$, $ {\rm arg}(z)$ denotes the
phase $\phi $. $\mathcal{R}(\mathbf{Y}) $ denotes the rank of $\mathbf{Y}$.

\section{SYSTEM MODEL} \label{s164}
\begin{figure}
	\psfrag{Transmitter}{source}
	\psfrag{Relays}{relays}
	\psfrag{Receivers}{destinations}
	\psfrag{destinations}{\bf Destinations}
	\psfrag{f}{\small $\bff$}
	\psfrag{g1}{\small $\bg_1$}
	\psfrag{g2}{\small $\bg_M$}
	\psfrag{g_N}{\small $\bg_N$}
\psfrag{d1}{\small $d_1$}
	\psfrag{dM}{\small $d_M$}
\psfrag{d2}{\small $d_2$}
	\psfrag{dM-1}{\small $d_{M-1}$}
\psfrag{f1}{\small $f_1$}
	\psfrag{fR}{\small $f_R$}
  \centerline{\epsfig{figure=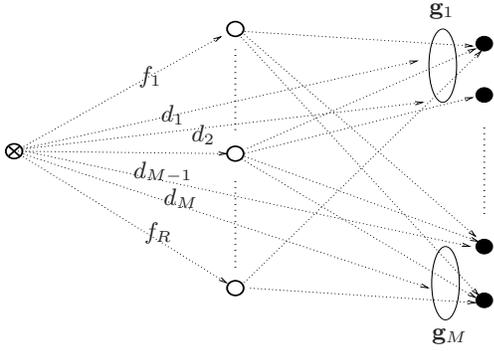,width=6.5cm,height = 4.5cm}}
\caption{Multicasting via the relay network; $\otimes$: source, $\circ $:
relay, $\bullet$: destination.} \label {f185}
\end{figure}
Consider a wireless network of $R$ relays forwarding the signals
from a single source to $M$ destinations. In our single-group multicasting scenario, all destinations demand the same information. The source, the relays, and
the destinations are single antenna devices, see Fig.~\ref{f185}.

\begin{figure}
	\psfrag{1st}{\small First time slot}
	\psfrag{2nd}{\small Second time slot}
	\psfrag{3rd}{\small Third time slot}
	\psfrag{4th}{\small Fourth time slot}
	\psfrag{f}{\small $\bff$}
	\psfrag{g1}{\small $\bg_1$}
	\psfrag{g2}{\small $\bg_M$}
	\psfrag{g_N}{\small $\bg_N$}
  \centerline{\epsfig{figure=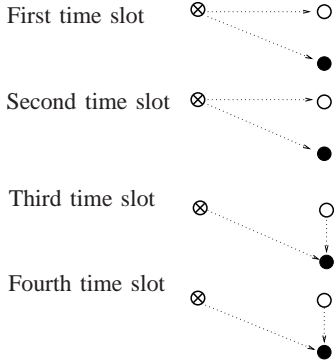,scale = 0.3}}
\caption{Proposed Rank-2-AFMS; $\otimes$: source, $\circ $:
relays, $\bullet$: destinations.} \label {TransmissionScheme}
\end{figure}

In the proposed Rank-2-AFMS, two data symbols are jointly processed in a
four time slot
scheme, see Fig.~\ref{TransmissionScheme}.
In the first and second time slot of the Rank-2-AFMS, the source
transmits the data
symbols $s_1$ and $s_2^\ast$, respectively, which are drawn from a discrete symbol constellation
$\mathcal{S}$. Both symbols are weighted by the same
real-valued power
scaling factor $\alpha_1$. The $R \times 1$ vectors
$\bx_1\triangleq{[x_{1,1},\dots,x_{R,1}]}^T$ and
$\bx_2\triangleq{[x_{1,2},\dots,x_{R,2}]}^T$ of the
received signals at the relays in the first and second time slot are
respectively given by
\begin{align}
	\bx_1 &= \bff \alpha_1 s_{1} + \boldsymbol{\eta}_1,
	 ~~~\bx_2 = \bff \alpha_1 s_{2}^* + \boldsymbol{\eta}_2,
\label{RelayReceived}
\end{align}
where
$\boldsymbol{\eta}_1\triangleq[\eta_{1,1},\dots,\eta_{R,1}]^T$ and
$\boldsymbol{\eta}_2\triangleq[\eta_{1,2},\dots,\eta_{R,2}]^T$ are the $R \times
1$
vectors of the relay noise of the first and second time slot, respectively, and where $
\bff \triangleq {[f_1,\dots,f_R]^T} $
 is the $R \times 1$ vector,
containing the complex coefficients {of the
channels} from the source to the relays.
Note that all channels
in the network are assumed to be frequency flat and constant over the
four considered time slots.

 In
contrast to the system model of \cite{Bornhorst},
we consider that
there exist direct channels from the source to the destinations.
The signals
$y_{m,1}$ and $y_{m,2}$
received by the $m$th destination in the first and second time slot,
respectively, are given by
\begin{equation}
y_{m,1} = d_m \alpha_1 s_1 +\nu_{m,1},~~~~~ y_{m,2} = d_m \alpha_1 s_2^\ast
+\nu_{m,2}, \label{DirectpathFirstTimeslot}
\end{equation}
where $d_m$ is the coefficient of the channel from the source to the $m$th
destination and $\nu_{m,1}$ and $\nu_{m,2}$ represent the destination noise
of the first and second time slot, respectively.

We make the practical assumptions that the noise processes in the network
are
spatially and temporally independent and complex Gaussian distributed.
The noise power at the destinations is given by
\begin{equation}
{\rm E}\{|\nu_{m,q}|^2\}=
\sigma_{\nu}^2,\quad q \in \{1,2,3,4\}, \label{NoiseReceiver}
\end{equation}
and the noise at the relays is distributed according to
\begin{equation}
\boldsymbol{\eta}_1 \sim
{\cal{N}}(\mathbf{0}_R,\sigma_\eta^2 \mathbf{I}_R),~~~ \boldsymbol{\eta}_2
\sim
{\cal{N}} (\mathbf{0}_R,\sigma_\eta^2
\mathbf{I}_R), \label{NoiseDistributionRelay}
\end{equation}
where $\sigma_\eta^2$ is the power of the noise at the relays.

Here, the relays transmit
their
signals over two
different beams. In this fashion, the proposed Rank-2-AFMS enables
the relays to create two different communication links from the source to each
destination. In the conventional Rank-1-AFMS of \cite{Bornhorst}, the
relays transmit their signals over one beam, creating a
single
communication link from the source to each destination.

The $R \times 1$  vectors $\bt_3\triangleq[t_{1,3},\dots,t_{R,3}]^T$ and
$\bt_4\triangleq[t_{1,4},\dots,t_{R,4}]^T$ of the signals transmitted by the
relays in
the third and fourth time slot, respectively, can be expressed as
\begin{align}
	\bt_3 &= \bW_1 \bx_1 + \bW_2 \bx_2^\ast,~~~~
	\bt_4 = -\bW_2 \bx_1^\ast + \bW_1 \bx_2, \label{RelayTrans}
\end{align}
where $\bW_1 \triangleq {\rm diag}(\bw_1^H)$, $\bW_2
\triangleq {\rm diag}(\bw_2^H)$, and
$\mathbf{w}_1=[w_{1,1},\dots,w_{R,1} ]^T $ and
$\mathbf{w}_2=[w_{1,2},\dots,w_{R,2} ]^T $ are the complex $R \times 1$
relay weight vectors. According to equations (\ref{RelayTrans}), the relays transmit their received
signal
vectors
$ \mathbf{x}_{1}$ and $ \mathbf{x}_{2}$ over the two weight vectors
$\mathbf{w}_1 $ and $\mathbf{w}_2$. The use of two weight vectors
increases the degrees of freedom in the distributed
beamformer design. In general, the superposition of multiple symbols that
are
simultaneously transmitted over
different beams leads to difficulties in the decoding at
the destinations. In our proposed Rank-2-AFMS however, symbol-by-symbol detection
at the destinations is facilitated by the  particular spatial encoding scheme applied in equations
(\ref{RelayTrans}). The latter encoding scheme corresponds to the popular Alamouti's OSTBC scheme,
in which two data symbols are transmitted over two spatial channels  over consecutive time slots \cite{Alamouti}. In contrast to conventional OSTBC
schemes, which do not require the availability of  channel state information (CSI) at the transmitter, the Rank-2-AFMS creates two
``artificial'' spatial channels which are shaped by choosing two beams with corresponding beamforming vectors that are designed
based on CSI. It is assumed
that the perfect CSI is available at a central
processing node that computes
the source power and the relay weights and communicates each parameter to its
corresponding node.

Note that according to equations (\ref{RelayTrans}), the
signals transmitted by the relays
consist of linear combinations of the received signals and their conjugates.
This is also the case in
distributed OSTBC schemes which, however, do not use CSI at the transmitter
\cite{DistributedSpaceTime1}, \cite{DistributedSpaceTime2}.

To further exploit the direct links from the source to the destinations, the
source transmits the signals $
\alpha_3 s_1 + \alpha_4 s_2$ and $ - \alpha_4 s_1^\ast +  \alpha_3 s_2^\ast$
in the third and fourth time slot, respectively, where $ \alpha_3$ and
$\alpha_4$
are complex-valued scaling factors.

The
received signals of the third and fourth time slot at the $m$th destination can then
be written as
\begin{align}
	y_{m,3} &= \bg_m^T \bt_3+ d_m \alpha_3 s_1 + d_m\alpha_4 s_2+
\nu_{m,3}, \nonumber \\ \relax
	y_{m,4} &= \bg_m^T \bt_4 - d_m\alpha_4 s_1^\ast +  d_m \alpha_3
s_2^\ast + \nu_{m,4}, \label{ReceivedSignal}
 \end{align}
where
\begin{equation}
\bg_m \triangleq [g_{m,1},\dots,g_{m,R} ]^T \label{gvector}
\end{equation}
is the $R \times 1$  vector
of the complex
frequency flat channels
between the relays and the $m$th destination
and $\nu_{m,3}$ and $\nu_{m,4}$ denote the receiver noise at the $m$th destination
in the third and fourth time slot, respectively.

Introducing $\by_m$ as the vector of the received signals at the $m$th
destination, $\bn_{m}$ as the vector of the
noise at the $m$th destination, and $\bH_{m}$ as the equivalent channel
matrix
and making use of equations (\ref{RelayReceived}), (\ref{RelayTrans}), and
(\ref{gvector}), the received signals of the four time slots in equations
(\ref{DirectpathFirstTimeslot}) and
(\ref{ReceivedSignal}) can be
compactly written as
\begin{align}
\by_m &= \bH_{m} \bs + \bn_m, \label{Systemmodel}
\end{align}
where
\begin{align}
\by_m & \triangleq [y_{m,1}, y_{m,2}^*, y_{m,3}, y_{m,4}^* ]^T,  \\ \relax
\bs & \triangleq [s_{1}, s_{2}]^T, \\
\bn_m & \triangleq  \! \begin{bmatrix} \nu_{m,1} \\
\nu_{m,2}^* \\
 \bw_1^H \bG_m \boldsymbol{\eta}_1 + \bw_2^H \bG_m \boldsymbol{\eta}_2^\ast +
\nu_{m,3} \\ \relax
-\! \bw_2^T \bG_m^H \boldsymbol{\eta}_1\! +\! \bw_1^T \bG_m^H
\boldsymbol{\eta}_2^\ast
+ \nu_{m,4}^*
\end{bmatrix}, \label{noisy}\\ \relax
\bH_{m} &\triangleq\! \begin{bmatrix}
            \alpha_1 d_m & 0 \\
            0 & (\alpha_1 d_m)^\ast \\
           &  \!\!\!\!\!\!\!\!\!\!\!\!\!\!\! \!\!\!\!\! \!\!\!     \bH_{{\rm A},m}
           \end{bmatrix},
 \label{Hmatrix} \\ \relax  \bH_{{\rm A},m} &\triangleq\! \begin{bmatrix}
            h_{m,1} & h_{m,2} \\ \relax
            -h_{m,2}^\ast & h_{m,1}^\ast
           \end{bmatrix} \label{Hmatrix2} \\   \relax
 h_{m,1}  &\triangleq \alpha_1 \bw_1^H \bG_m \bff + \alpha_3 d_m,
\label{channelgains1}
\\ \relax
h_{m,2}   & \triangleq \alpha_1 \bw_2^H \bG_m \bff^\ast + \alpha_4 d_m,
\label{channelgains2} \\ \relax
\bG_m &\triangleq {\rm diag}(\bg_m). \label{GMatrix}
\end{align}
Note that $h_{m,1}$ and $h_{m,2}$ denote the equivalent scalar channel coefficients that model the flat fading channels between the source and the $m$th destination that are obtained from relay beamforming using weight vectors $\bw_1$ and $\bw_2$, respectively. According to equation (\ref{Hmatrix2}) these channels form the matrix $\bH_{{\rm A},m}$, which has the same orthogonal structure as the channel matrix of Alamouti's OSTBC scheme. Interestingly, if no direct source-destination channels exist, i.e., $d_m=0$ for all $m=1,\dots,M$ and the destinations receive signals  only in the third and the fourth time slot, the developed system model of equation (\ref{Systemmodel}) corresponds to the system model of \cite{Rank2} and \cite{Xin} for centralized rank-two beamforming. However, for the latter schemes, the noise at the destinations can be modeled as temporally white.
In contrast to the proposed relay beamforming approach
we observe from equation (\ref{noisy}) that the third and the fourth entry of the noise vector $\mathbf{n}_m$ both depend on the relay noise $\boldsymbol{\eta}_1$ and $\boldsymbol{\eta}_2$ in the first and the second time slot, respectively.
However, due to the specific orthogonal transmission format corresponding to the proposed AFMS it can readily be verified that the covariance matrix of the noise vector in equation (\ref{noisy}) exhibits a diagonal structure of the form
\begin{equation}
{\rm E} \left\{ \bn_m \bn_m^H \right\} = {\rm
blkdiag} \left( [\sigma_{\nu}^2 \mathbf{I}_2,
 \sigma_{m,34}^2 \mathbf{I}_2] \right),
\label{NoiseMatrix}
\end{equation}
where
\begin{align}
\sigma_{m,34}^2 & \triangleq \sigma_\eta^2 (\bw_1^H {\cal G}_m \bw_1 + \bw_2^H
{\cal G}_m
\bw_2)
+\sigma_\nu^2,\label{NoisePower} \\
 { \cal G}_m & \triangleq
\bG_m \bG_m^H  \!=\! {\rm
diag}([|g_{m,1}|^2,\dots,|g_{m,R}|^2]). \label{Def:calG}
\end{align}
The spatially and temporally uncorrelated noise property expressed in equation (\ref{NoiseMatrix}) is essential for the use of simple symbol-by-symbol detection at the destination, which achieves Maximum-Likelihood (ML) performance as will be shown in the following.
To account for the difference in the noise powers of the different time slots due to
noise amplification at the relays, we transform both sides
of equation (\ref{Systemmodel}) by
multiplication with the
diagonal scaling matrix
\begin{equation}
\boldsymbol{\Gamma} \triangleq {\rm
blkdiag} \left([ (
\sigma_{m,34}/\sigma_{\nu}) \mathbf{I}_2, \mathbf{I}_2 ]\right), \label{Umatzrix}
\end{equation}
resulting in the equivalent model representation
\begin{equation}
 \boldsymbol{\Gamma}\by_m = \boldsymbol{\Gamma} \bH_{m} \bs +
\boldsymbol{\Gamma}
\bn_m, \label{UEq}
\end{equation}
corresponding to a uniform noise power model, i.e.,
$\boldsymbol{\Gamma} \bn_m$  is given by
$ {\rm E} \left\{\! \left( \boldsymbol{\Gamma} \bn_m\right) \left(
\boldsymbol{\Gamma} \bn_m
\right)^H \!
\right\} \!\!= \sigma_{m,34}^2 \mathbf{I}_2.$
Then, using equation (\ref{UEq}), the  ML detection problem of finding $\bs$ can be
equivalently reformulated
as the least squares problem
\begin{equation}
 \min_{\bs \in \mathcal{S}\times \mathcal{S}} \lVert \boldsymbol{\Gamma}\by_m -
\boldsymbol{\Gamma} \bH_{m} \bs
\rVert^2. \label{MLprob}
\end{equation}
From equations (\ref{Hmatrix}) and (\ref{Umatzrix}), we observe that
$\mathbf{H}_{m}^H \boldsymbol{\Gamma}^H \boldsymbol{\Gamma} \mathbf{H}_{m}   = c^2 \mathbf{I}_2 $, where
\begin{equation*}
 c  =
\sqrt{(\alpha_1^2 |d_m|^2 \sigma_{m,34}^2)/\sigma_{\nu}^2 +
|h_{m,1}|^2 + |h_{m,2}|^2}. \label{norm}
\end{equation*}
Let us define the matrix $\boldsymbol{\Pi} \triangleq
(1/c)[\boldsymbol{\Gamma} \bH_{m}, \mathbf{Z} ],$ where the matrix $\mathbf{Z}$
is chosen such that $\boldsymbol{\Pi}^H \boldsymbol{\Pi} =
\mathbf{I}_2 $, i.e., equation
(\ref{MLprob}) can be equivalently written as
\begin{align}
 & \min_{\bs \in \mathcal{S} \times \mathcal{S}} \left\lVert
\boldsymbol{\Pi}^H \boldsymbol{\Gamma}\by_m - c
\begin{bmatrix} \mathbf{I}_2 \\
\mathbf{O}_2     \end{bmatrix}
\bs \right\rVert^2 \nonumber \\
= &  \min_{\bs \in \mathcal{S} \times \mathcal{S}} c^2 \lVert
\hat{\mathbf{s}}_m - \bs \rVert^2  + {\rm constant}\label{MLprob2}
\end{align}
where
\begin{equation}
\hat{\mathbf{s}}_m \triangleq \mathbf{B} \by_m = \mathbf{s} + \mathbf{B}
\bn_{m} \label{decoded},
\end{equation}
and
\begin{align}
 &\mathbf{B} \triangleq  (1/c)^2(\boldsymbol{\Gamma}
\bH_{m})^H\boldsymbol{\Gamma}.
\label{BMatrix}
\end{align}
From equation (\ref{MLprob2}), the detection of $s_1$ and $s_2$ decouples into two
scalar detection problems since
\begin{align}
\hat{\mathbf{s}}_m  & \sim
\mathcal{N} \left(\mathbf{s}, \sigma_m^2 \mathbf{I}_2
\right), \label{NoiseDistibutionReceiver} \\
\sigma_m^2 & \triangleq \frac{\sigma_{m,34}^2 \sigma_{\nu}^2}{\alpha_1^2
|d_m|^2\sigma_{m,34}^2 +\sigma_{\nu}^2(|h_{m,1}|^2+|h_{m,2}|^2) },
\end{align}
which follows directly from  equations (\ref{NoiseMatrix}) and (\ref{BMatrix}). In other
words, in the proposed four phase scheme
 ML detection  reduces to simple symbol-by-symbol detection. We remark that
the diagonal structure of the error covariance matrix follows from the
orthogonal encoding in equations (\ref{RelayTrans}).

From equation (\ref{NoiseDistibutionReceiver}), we see that the SNR in
$\hat{\mathbf{s}}_m$ is given by
\begin{equation}
 {\rm SNR}_m = \frac{1}{\sigma_m^2} =
\frac{\alpha_1^2 |d_m|^2}{\sigma_{\nu}^2} +
\frac{|h_{m,1}|^2+|h_{m,2}|^2}{\sigma_{m,34}^2},\label{SNR}
\end{equation}
for both
data symbols. For the sake of
convenience,
let us introduce the following vector notation
\begin{align}
\mathbf{w} & \triangleq [\tilde{\mathbf{w}}_1^T,\tilde{\mathbf{w}}_2^T]^T,
\label{startNot}\\ \relax
\tilde{\mathbf{w}}_1 & \triangleq [
{\mathbf{w}}_1,(\alpha_3  \alpha_1)^\ast]^T,\quad
\tilde{\mathbf{w}}_2
\triangleq [ \mathbf{w}_2^T, (\alpha_4  \alpha_1)^\ast
]^T, \\ \relax
\mathbf{R}_m  &\triangleq {\rm
blkdiag}([\tilde{\mathbf{R}}_m,\tilde{\mathbf{R}}_m]), \label{Rm} \\ \relax
\tilde{\mathbf{R}}_m &  \triangleq {\rm
blkdiag}([\sigma_\eta^2{\cal G}_m,0]). \label{endNot}
\end{align}
Using equations  (\ref{NoisePower}) and
(\ref{startNot}) - (\ref{endNot}) yields
\begin{align}
 \sigma_{m,34}^2 & = \tilde{\mathbf{w}}_1^H \tilde{\mathbf{R}}_m
\tilde{\mathbf{w}}_1+\tilde{\mathbf{w}}_2^H \tilde{\mathbf{R}}_m
\tilde{\mathbf{w}}_2 + \sigma_\nu^2 \label{noseVEc}
\\ \relax &=\mathbf{w}^H \mathbf{R}_m \mathbf{w} + \sigma_\nu^2.
\label{noiesVec}
\end{align}
Let us furthermore introduce
\begin{align}
\mathbf{Q}_m  &\triangleq {\rm
blkdiag}([\tilde{\mathbf{Q}}_{m,1},\tilde{\mathbf{Q}}_{m,2}]), \label{Qm}
\\ \relax
\tilde{\mathbf{Q}}_{m,1} &\triangleq \mathbf{q}_{m,1}\mathbf{q}_{m,1}^H, ~~
\tilde{\mathbf{Q}}_{m,2} \triangleq
\mathbf{q}_{m,2}\mathbf{q}_{m,2}^H,\label{MatrixDefinition}\\ \relax
\mathbf{q}_{m,1} & \triangleq [(\bG_m \bff)^T,d_m]^T, \quad \mathbf{q}_{m,2}
 [(\bG_m \bff^\ast)^T,d_m]^T, \label{channelvectors} \relax \\
 a &\triangleq 1 / |\alpha_1|^2, \label{a_def}
\end{align}
where $a$ is a power scaling factor. With equations (\ref{channelgains1}),
(\ref{channelgains2}), and  (\ref{Qm}) - (\ref{a_def}), we have
\begin{align*}
|h_{m,1}|^2+|h_{m,2}|^2 & = \left( \tilde{\mathbf{w}}_1^H
\tilde{\mathbf{Q}}_{m,1}
\tilde{\mathbf{w}_1} + \tilde{\mathbf{w}}_2^H \tilde{\mathbf{Q}}_{m,2}
\tilde{\mathbf{w}}_2 \right)/a\\ \relax &= \mathbf{w}^H \mathbf{Q}_m
\mathbf{w}/a.
\end{align*}
Using the above identity together with equations (\ref{noseVEc}) and (\ref{noiesVec}),
we reformulate the SNR given in equation (\ref{SNR}) in vector notation as
\begin{align}
& {\rm SNR}_m (\tilde{\mathbf{w}}_1,\tilde{\mathbf{w}}_2,a) \nonumber \\ \relax&
=
\frac{\tilde{\mathbf{w}}_1^H \tilde{\mathbf{Q}}_{m,1}
\tilde{\mathbf{w}}_1+\tilde{\mathbf{w}}_2^H
\tilde{\mathbf{Q}}_{m,2}
\tilde{\mathbf{w}}_2}{(\tilde{\mathbf{w}}_1^H
\tilde{\mathbf{R}}_m \tilde{\mathbf{w}}_1+\tilde{\mathbf{w}}_2^H
\tilde{\mathbf{R}}_m \tilde{\mathbf{w}}_2 + \sigma_\nu^2)a} +
\frac{|d_m|^2}{\sigma_\nu^2 a}, \label{SNR_Reformulated_new}
\end{align}
or, equivalently, as
\begin{equation}
{\rm SNR}_m (\mathbf{w},a) =\! \frac{\mathbf{w}^H \mathbf{Q}_m
\mathbf{w}}{(\mathbf{w}^H
\mathbf{R}_m \mathbf{w} + \sigma_\nu^2)a} +
\frac{|d_m|^2}{\sigma_\nu^2 a}. \label{SNR_Reformulated}
\end{equation}

\section{Beamformer Design and Power Control} \label{s218}
In this section, we derive an algorithm to design the weight vectors and to
distribute the power between different time slots and between the source and the
relays. We consider the problem of maximizing the minimum QoS measured in terms
of the SNR at
the destinations subject to power constraints.
The power constraints include thresholds on the individual power of each relay and the source, the sum power of the relays, and the total power of the network.
Maximizing the minimum SNR is a practical objective, e.g., for packet-data traffic with full buffer networks the receivers demand the largest feasible data rates rather than a certain constant data rate, provided that the average rate is satisfactory \cite{4G}.
Hence, max-min fairness is a common design criterion that has been used in \cite{Chen08}, \cite{MultiUserTwoWay}, \cite{Schad2}, \cite{multigroupMulticast}, \cite{1}, and \cite{Dartman2}.
Here, the corresponding max-min fairness optimization problem is formulated as
\vspace{2mm}
\begin{center}
\fbox{\parbox[c]{8.25cm}{
{$\mathcal{P}$:}
\begin{eqnarray*}
\max_{(\mathbf{w},a)}  \min_{m\in \{1,\dots,M\}}  &{\rm SNR}_m  \nonumber
\\ \relax {\rm
s.t.} \quad &  (\mathbf{w},a) \in \Omega, \label{Opt:SNR_Maximization}
\end{eqnarray*}
}}
\end{center}
\vspace{2mm}
where $\mathbf{w}$ and $a$  belong to the set \begin{equation*}
 \Omega \triangleq \bigl\{ \mathbf{w},  a  \mid \mathbf{w},a
\textnormal{  satisfy (\ref{POwerConstraintsOriginal})}
\bigr\}
\end{equation*} characterized by the following
power constraints
\begin{subequations}
\label{POwerConstraintsOriginal}
\begin{align}
\textnormal{positivity: } &   a > 0,  \label{pospower} \\ \relax
\textnormal{individual relay power: } &   p_{r}(\mathbf{w},a) \leq p_{r,\max}\quad \forall r \in \{1,\dots,R\},
 \label{IndPowerConstraint} \\ \relax
\textnormal{relay sum power: } & \sum_{r=1}^R p_r  (\mathbf{w},a)\leq P_{R,\max},
\label{SumPowerConstraint}\\ \relax
\textnormal{source power: } & P_S(\mathbf{w},a) \leq P_{S,\max}, \label{SourcePowerConstraint}\\ \relax
\textnormal{total power: }& P_T(\mathbf{w},a) \leq P_{T,\max}. \label{overallPowerConstraint}
\end{align}
\end{subequations}
The positivity condition  (\ref{pospower}) results from the parameter transformation  in definition (\ref{a_def}) and ensures that the source power is a real and
positive number.
In inequality (\ref{IndPowerConstraint}),
$p_r(\mathbf{w},a)$ represents the transmit power of the $r$th relay,
$p_{r,\max} $ is the maximum individual power
value for the $r$th relay,
$ \sum_{r=1}^R p_r (\mathbf{w},a)$ in inequality (\ref{SumPowerConstraint}) is the sum
power
transmitted by the relays in one time slot for which the power threshold value $P_{R,\max} $ applies,
$P_S(\mathbf{w},a)$  in inequality (\ref{SourcePowerConstraint}) is the total transmitted
power
at the source in four consecutive time slots with threshold value $P_{S,\max}$,
$P_{T}(\mathbf{w},a) $ in inequality (\ref{overallPowerConstraint}) is the total
transmit power of
the
network, including source and relay powers,
during four time slots,
and $P_{T,\max}$ is the threshold value for
$P_{T}(\mathbf{w},a)$.

Using equations (\ref{RelayReceived}) and
(\ref{RelayTrans}), the power $p_r(\mathbf{w},a)$
transmitted by the $r$th relay in the third time slot can
be
derived as
\begin{align}
&p_{r}(\mathbf{w},a)  = E\{| t_{r,3}|^2\} \nonumber \\ &= E\{|\alpha_1
w_{r,1}^\ast f_r
s_1+w_{r,1}^\ast \eta_{r,1} +
\alpha_1^\ast
w_{r,2}^\ast f_r^\ast s_2 +w_{r,2}^\ast \eta_{r,2}^\ast|^2\}
\nonumber \\ \relax & = \left(|w_{r,1}|^2 +
|w_{r,2}|^2 \right) ( |\alpha_1  f_r|^2+\sigma_\eta^2) \nonumber \\
& = \tilde{\mathbf{w}}_1^H \tilde{{\bD}}_r \tilde{\mathbf{w}}_1/a +
\tilde{\mathbf{w}}_2^H \tilde{{\bD}}_r \tilde{\mathbf{w}}_2/a +
\tilde{\mathbf{w}}_1^H \tilde{{\mathbf{E}}}_r \tilde{\mathbf{w}}_1 +
\tilde{\mathbf{w}}_2^H \tilde{{\mathbf{E}}}_r \tilde{\mathbf{w}}_2\nonumber \\
&=
\mathbf{w}^H {\bD}_r \mathbf{w}/a + \mathbf{w}^H {\mathbf{E}}_r
\mathbf{w},\label{DefRelay}
\end{align}
where the $(R+1) \times
(R+1) $ matrices
$\tilde{\bD}_r$ and $\tilde{\mathbf{E}}_r$ have respectively $ |f_r|^2 $
and $\sigma_\eta^2 $ as their $r$th
 diagonal entry and zeros elsewhere, and where ${\bD}_r
\triangleq {\rm
blkdiag}([\tilde{\bD}_r,\tilde{\bD}_r])$ and ${\bE}_r \triangleq {\rm
blkdiag}([\tilde{\bE}_r,\tilde{\bE}_r])$. In equations (\ref{DefRelay}),
we have used the assumption that the data symbols are independent and identically distributed
with zero mean and unit variance. Due to the symmetry in the transmission scheme, the relay power in
the fourth time slot is equivalent to the relay power of the third time slot,
i.e., $p_r(w,a) = E\{ |t_{r,3}|^2\}= E\{ |t_{r,4}|^2\}.$ Hence $p_r(w,a)$ represents
the
relay power consumed in each time slot in which the relays transmit. Note that the inequality constraints (\ref{IndPowerConstraint})
are convex as $p_{r}(\mathbf{w},a)$ in equations (\ref{DefRelay}) is expressed as the
sum of the convex quadratic form
$\mathbf{w}^H
{\mathbf{E}}_r
\mathbf{w}$ as well as the fraction of the convex
quadratic form $\mathbf{w}^H {\bD}_r \mathbf{w}$ and the linear term $a$, which
is a
convex function \cite{Boyd}. The same holds true for the condition
$\sum_{r=1}^R
p_{r}(\mathbf{w},a) \leq P_{R,\max}$ in inequality
(\ref{SumPowerConstraint}), as the summation of
convex functions yields a convex
function.  The  transmit power of the
source during the four
time slots is given by
\begin{align}
P_S (\mathbf{w},a) &= E\{| \alpha_1 s_1 |^2\} + E\{| \alpha_1 s_2^\ast |^2\}
\!+\!
E\{| \alpha_3
s_1 + \alpha_4 s_2 |^2\} \nonumber \\ \relax&+  E\{| - \alpha_4 s_1^\ast +
\alpha_3
s_2^\ast|^2\} ={2}/{a} + 2 |\alpha_3 |^2+ 2 |\alpha_4 |^2 \nonumber \\ \relax
& =  2/a +\tilde{\mathbf{w}}_1^H \tilde{{\mathbf{S}}}
\tilde{\mathbf{w}}_1/a+\tilde{\mathbf{w}}_2^H \tilde{{\mathbf{S}}}
\tilde{\mathbf{w}}_2/a\nonumber \\
& = 2/a + \mathbf{w}^H {\mathbf{S}} \mathbf{w}/a,
\end{align}
where $\tilde{\mathbf{S}}$ is an $(R+1) \times
(R+1) $  matrix, having $2$ as its $(R+1)$th diagonal entry
and zeros elsewhere and where $ \mathbf{S} \triangleq {\rm
blkdiag}([\tilde{\mathbf{S}},\tilde{\mathbf{S}}])$. Note that
$P_S(\mathbf{w},a)$
is a
convex function of $\mathbf{w}$ and
$a$.

As the sum powers of the
relays of the third and fourth time slot are equal, the
total transmit power of the relays and the source during four time slots amounts to
\begin{equation}
 P_{T}(\mathbf{w},a) = P_S(\mathbf{w},a) + 2 \sum_{r=1}^R p_r (\mathbf{w},a).
\label{DefPT}
\end{equation}

With equations (\ref{DefRelay}) - (\ref{DefPT}), the
power constraints (\ref{pospower}) - (\ref{overallPowerConstraint}) can be reformulated as
\begin{subequations}
\label{POwerConstraintsVec}
\begin{align}
& a > 0,   \label{constraint_a}\\ \relax
& \mathbf{w}^H {\bD}_r \mathbf{w}/a + \mathbf{w}^H {\mathbf{E}}_r
\mathbf{w} \leq p_{r,\max}\quad \forall r \in \{1,\dots,R\}, \label{constraint:Ind_Power}
 \\ \relax
& \sum_{r=1}^R \left(\mathbf{w}^H {\bD}_r \mathbf{w}/a + \mathbf{w}^H
{\mathbf{E}}_r
\mathbf{w} \right) \leq P_{R,\max}, \\ \relax
& 2/a + \mathbf{w}^H {\mathbf{S}} \mathbf{w}/a \leq P_{S,\max}, \label{constraint:Source_Power} \\
\relax
& 2/a + \mathbf{w}^H {\mathbf{S}} \mathbf{w}/a \!+\!2\sum_{r=1}^R  \left( \!
\mathbf{w}^H
{\bD}_r \mathbf{w}/a \!+\ \mathbf{w}^H {\mathbf{E}}_r
\mathbf{w} \!\right) \leq
P_{T,\max},
\label{Opt:Powerconstraints2}
\end{align}
\end{subequations}
respectively.
Note that $\mathcal{P}$  represents the optimization problem in
its general form. We remark that the optimization
procedures  developed in the following for problem $\mathcal{P}$ remain valid if some constraints
(\ref{IndPowerConstraint}) - (\ref{overallPowerConstraint}) are removed.

Introducing the auxiliary variable $t$, problem $\mathcal{P}$  can
equivalently be written as
\vspace{2mm}
\begin{center}
 \fbox{\parbox[c]{8.25cm}{
$\mathcal{M}:$
\begin{align*}
\min_{\mathbf{w},a,t}  \quad & t \nonumber \\ \relax
 {\rm s.t.} \quad & t > 0,  \nonumber \\ \relax
& {\rm SNR}_m(\mathbf{w},a) \geq 1/t \quad \forall m \in \{1,\dots,M\},
\nonumber \\ \relax
& (\mathbf{w},a) \in \Omega,
\end{align*}
}}
\end{center}
\vspace{2mm}
where $1/t$ represents the minimum SNR at the destinations.

The difficulty associated with solving problem $\mathcal{M}$ lies in the SNR
constraints which can be formulated as
\begin{align}
& \lambda_m ( \mathbf{w}, a,  t) \triangleq \nonumber \\ &\frac{\mathbf{w}^H
\mathbf{R}_m \mathbf{w} \!+\! \sigma_\nu^2}{t} -
\frac{\mathbf{w}^H (\mathbf{Q}_m
\!+\!(|d_m|^2 /\sigma_\nu^2)\mathbf{R}_m)
 \mathbf{w} \!+\! |d_m|^2 }{a} \!\leq\! 0,\label{SNR-Constraint}
\end{align}
where we have used
the SNR expression of equation (\ref{SNR_Reformulated}). Due to the negative term on the
left hand side of the above inequality, the SNR constraints are non-convex in
general.

\subsection{Rank-two property and relation of the Rank-2-AFMS to the
Rank-1-AFMS} \label{sR2}
The Rank-1-AFMS can be regarded as a special
case of
the Rank-2-AFMS where symbols are transmitted sequentially by a single
beamformer, i.e., choosing $\mathbf{w}_2 = \mathbf{0}_{R+1}$. In the Rank-1-AFMS, each
symbol
is
communicated in two time slots; in the first time slot the source sends
the
signal to the relays and in the second time slot the relays forward their
received signals to the  destinations.
Note that the number of time slots used to transmit
one data symbol is two for the Rank-1-AFMS ($\bw_2 =
\mathbf{0}_{R+1}$)
and the proposed Rank-2-AFMS ($\bw_2 \neq \mathbf{0}_{R+1}$).

In the following we analyze the Rank-1-AFMS and the Rank-2-AFMS,
applying
SDR to problem $\mathcal{M}$. We will prove that for both schemes, the
SDR versions of $\mathcal{M}$ are equivalent.

Let us derive an equivalent representation of the SNR at the $m$th
destination
given by equation  (\ref{SNR_Reformulated_new}).
Defining the matrix $\mathbf{A} \triangleq {\rm
diag}([e^{2j\varphi_{1},},\dots,e^{2j\varphi_{R}},1])$, where $\varphi_{r}
\triangleq{\rm
arg}(f_r)$, we notice from the definitions (\ref{Qm}) and
(\ref{MatrixDefinition}) that
$\tilde{\mathbf{Q}}_{m,1} =\mathbf{A}^H \tilde{\mathbf{Q}}_{m,2} \mathbf{A}$.
The unitary
transformation $\hat{\mathbf{w}}_2 \triangleq
\mathbf{A}\tilde{\mathbf{w}}_2$  exhibits the useful
property that
$ \tilde{\mathbf{w}}_2^H
\tilde{\mathbf{Q}}_{m,2}
\tilde{\mathbf{w}}_2  = \hat{\mathbf{w}}_2^H
\tilde{\mathbf{Q}}_{m,1}
\hat{\mathbf{w}}_2$. Moreover, $
\tilde{\mathbf{w}}_2^H
\tilde{\mathbf{R}}_{m}
\tilde{\mathbf{w}}_2 =\hat{\mathbf{w}}_2^H
\tilde{\mathbf{R}}_{m}
\hat{\mathbf{w}}_2, $ which follows from the definition of $\mathbf{A}$ and
from equations (\ref{Def:calG}) and (\ref{endNot}). Then, using equation
(\ref{SNR_Reformulated_new}), the SNR constraints  (\ref{SNR-Constraint})
 can be formulated as
\begin{align} & \tilde{\mathbf{w}}_1^H \tilde{\mathbf{Q}}_{m,1}
\tilde{\mathbf{w}}_1+\hat{\mathbf{w}}_2^H
\tilde{\mathbf{Q}}_{m,1}
\hat{\mathbf{w}}_2 \geq \nonumber \\ \relax &\left(\frac{a}{t} -
\frac{|d_m|^2}{\sigma_\nu^2 }\right)(\tilde{\mathbf{w}}_1^H
\tilde{\mathbf{R}}_m \tilde{\mathbf{w}}_1+\hat{\mathbf{w}}_2^H
\tilde{\mathbf{R}}_m \hat{\mathbf{w}}_2 + \sigma_\nu^2).
\nonumber
 \end{align}
Due to the quadratic forms $\tilde{\mathbf{w}}_1^H
\tilde{\mathbf{Q}}_{m,1}
\tilde{\mathbf{w}}_1$ and $\hat{\mathbf{w}}_2^H
\tilde{\mathbf{Q}}_{m,1}
\hat{\mathbf{w}}_2$, the above inequality describes a non-convex set.
Non-convex problems are difficult to solve and generally
NP-hard. To reformulate non-convex QCQPs, the SDR technique has been
proposed in the literature
\cite{TransmitBeamforming}, \cite{multigroupMulticast}.  In the latter technique,
the identity $\tilde{\mathbf{w}}_1^H
\tilde{\mathbf{Q}}_{m,1}
\tilde{\mathbf{w}}_1 = {\rm
tr}(\tilde{\mathbf{w}}_1 \tilde{\mathbf{w}}_1^H \tilde{\mathbf{Q}}_{m,1}) $ is
exploited and $ \tilde{\mathbf{w}}_1 \tilde{\mathbf{w}}_1^H $ is substituted in $\cal M$  by a
positive semidefinite matrix $\mathbf{X}_1$. Substituting further $\hat{\mathbf{w}}_2\hat{\mathbf{w}}_2^H$ by
${\mathbf{X}}_2 $ leads to
\begin{align}
\min_{{\mathbf{X}}_1,{\mathbf{X}}_2,t,a}  \quad & t \nonumber \\ \relax
 {\rm s.t.} \quad & t > 0,  \nonumber \\ \relax
&   {\rm tr} \bigl( \left(\mathbf{X}_1  +\mathbf{X}_2 \right)
\tilde{\mathbf{Q}}_{m,1} \bigr)\geq \nonumber \\ \relax &\left(\frac{a}{t} -
\frac{|d_m|^2}{\sigma_\nu^2 }\right)\left({\rm tr} ( (\mathbf{X}_1
+\mathbf{X}_2)
\tilde{\mathbf{R}}_{m}) + \sigma_\nu^2\right) \nonumber \\& \forall m \in
\{1,\dots,M\},
\nonumber \\ \relax
& (\mathbf{X}_1 + \mathbf{X}_2, a) \in \Upsilon, \nonumber  \\
& \mathbf{X}_1 \succeq 0, {\mathbf{X}}_2 \succeq 0,  \label{Opt:MaxMinSDR}
\end{align}
where the constraints ${\rm
rank}(\mathbf{X}_1 )= 1$ and ${\rm
rank}(\mathbf{X}_2 )= 1$ are neglected and where the set $$\Upsilon \triangleq
\bigl\{
\mathbf{X}, a  \mid \mathbf{X},a
\textnormal{  satisfy
(\ref{SDRPOwerConstraints})}
\bigr\}$$ is defined by the power constraints
\begin{subequations}
\label{SDRPOwerConstraints}
\begin{align}
& a > 0, \label{constraint_a_SDR}\\ \relax
& {\rm tr} \bigl(\mathbf{X}(
\tilde{\mathbf{D}}_{r}/a +\tilde{\mathbf{E}}_{r} ) \bigr) \leq
p_{r,\max}\quad \forall r \in \{1,\dots,R\}, \label{constraint_pr_SDR}
\\ \relax
& \sum_{r=1}^R {\rm tr} \bigl(\mathbf{X}
(
\tilde{\mathbf{D}}_{r}/a +\tilde{\mathbf{E}}_{r} ) \bigr) \leq P_{R,\max},
\label{constraint_PR_SDR} \\ \relax
& 2/a + {\rm tr} \bigl( \mathbf{X}
\tilde{\mathbf{S}} \bigr)/a \leq P_{S,\max},\\
\relax
& 2/a + {\rm tr} \bigl( \mathbf{X}
\tilde{\mathbf{S}} \bigr)/a +2\sum_{r=1}^R {\rm tr} \bigl(\mathbf{X}
(
\tilde{\mathbf{D}}_{r}/a +\tilde{\mathbf{E}}_{r})\bigr) \leq
P_{T,\max}.
\label{Opt:PowerconstraintsSDR}
\end{align}
\end{subequations}
Here the constraints (\ref{constraint_a_SDR}) - (\ref{Opt:PowerconstraintsSDR}) correspond
to constraints
(\ref{constraint_a}) - (\ref{Opt:Powerconstraints2}), respectively.
 Note that for the Rank-1-AFMS
$\hat{\mathbf{w}}_2 = \mathbf{0}_{R+1}$ follows from ${\mathbf{w}}_2 =
\mathbf{0}_{R+1}$ and
 $\mathbf{X}_2 = \mathbf{O}_{R+1}$ holds true. For the Rank-1-AFMS, the SDR
problem corresponding to problem  (\ref{Opt:MaxMinSDR})  is given by
\begin{align}
\min_{{\mathbf{X}}_1,t,a}  \quad & t \nonumber \\ \relax
 {\rm s.t.} \quad & t > 0,  \nonumber \\ \relax
&   {\rm tr} \bigl( \mathbf{X}_1
\tilde{\mathbf{Q}}_{m,1} \bigr)\geq \left(\frac{a}{t}
-
\frac{|d_m|^2}{\sigma_\nu^2 }\right)\!\!\left({\rm tr} \bigl( \mathbf{X}_1
\tilde{\mathbf{R}}_{m} \bigr) \!+\! \sigma_\nu^2\right) \nonumber\\& \forall m \in
\{1,\dots,M\},
\nonumber \\ \relax
& (\mathbf{X}_1,a) \in \Upsilon, \nonumber  \\
& \mathbf{X}_1 \succeq 0.  \label{Opt:MaxMinSDRAFMS}
\end{align}
\newtheorem{Bodo}[Satz]{Theorem}
\begin{Bodo}
\textbf{(Equivalence of SDR problems)} The optimum value $t^\star$ of the problem (\ref{Opt:MaxMinSDR}) corresponding
to the Rank-2-AFMS is the same as for
problem (\ref{Opt:MaxMinSDRAFMS}) corresponding to the Rank-1-AFMS.
\end{Bodo}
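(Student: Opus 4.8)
The plan is to observe that in problem (\ref{Opt:MaxMinSDR}) the matrix variables $\mathbf{X}_1$ and $\mathbf{X}_2$ never enter individually — only through their sum. Both the SNR-type constraints (through the traces ${\rm tr}((\mathbf{X}_1+\mathbf{X}_2)\tilde{\mathbf{Q}}_{m,1})$ and ${\rm tr}((\mathbf{X}_1+\mathbf{X}_2)\tilde{\mathbf{R}}_{m})$) and the power constraints (through the membership $(\mathbf{X}_1+\mathbf{X}_2,a)\in\Upsilon$) depend on $(\mathbf{X}_1,\mathbf{X}_2)$ only via $\mathbf{X}\triangleq\mathbf{X}_1+\mathbf{X}_2$. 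Consequently, the content of the positive semidefiniteness requirements $\mathbf{X}_1\succeq 0$, $\mathbf{X}_2\succeq 0$ reduces, for the purposes of (\ref{Opt:MaxMinSDR}), to the single condition $\mathbf{X}\succeq 0$: on the one hand, the sum of two positive semidefinite matrices is positive semidefinite; on the other hand, any $\mathbf{X}\succeq 0$ can be written as $\mathbf{X}=\mathbf{X}+\mathbf{O}_{R+1}$. Hence the set of triples $(\mathbf{X},t,a)$ reachable as $(\mathbf{X}_1+\mathbf{X}_2,t,a)$ from a feasible point of (\ref{Opt:MaxMinSDR}) is exactly the feasible set of (\ref{Opt:MaxMinSDRAFMS}).

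I would make this precise through two inclusions. Given a feasible point $(\mathbf{X}_1,t,a)$ of (\ref{Opt:MaxMinSDRAFMS}), the point $(\mathbf{X}_1,\mathbf{O}_{R+1},t,a)$ satisfies $\mathbf{X}_1+\mathbf{O}_{R+1}=\mathbf{X}_1$, hence meets every constraint of (\ref{Opt:MaxMinSDR}) with the same objective value $t$; therefore every objective value attainable in (\ref{Opt:MaxMinSDRAFMS}) is attainable in (\ref{Opt:MaxMinSDR}), so $t^\star$ of (\ref{Opt:MaxMinSDR}) is at most that of (\ref{Opt:MaxMinSDRAFMS}). Conversely, given a feasible point $(\mathbf{X}_1,\mathbf{X}_2,t,a)$ of (\ref{Opt:MaxMinSDR}), the matrix $\mathbf{X}\triangleq\mathbf{X}_1+\mathbf{X}_2$ is positive semidefinite, and substituting $\mathbf{X}$ for $\mathbf{X}_1$ in the constraints of (\ref{Opt:MaxMinSDRAFMS}) reproduces exactly the SNR constraints and the membership $(\mathbf{X},a)\in\Upsilon$ already satisfied by $(\mathbf{X}_1,\mathbf{X}_2,t,a)$; thus $(\mathbf{X},t,a)$ is feasible for (\ref{Opt:MaxMinSDRAFMS}) with objective value $t$, and $t^\star$ of (\ref{Opt:MaxMinSDRAFMS}) is at most that of (\ref{Opt:MaxMinSDR}). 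The two inequalities give the claimed equality.

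There is no real obstacle in this argument; the only care needed is the bookkeeping that verifies the power-constraint set $\Upsilon$ in (\ref{SDRPOwerConstraints}) and the SNR constraints in (\ref{Opt:MaxMinSDR}) truly involve $\mathbf{X}_1$ and $\mathbf{X}_2$ only in the combination $\mathbf{X}_1+\mathbf{X}_2$ — which is evident from the way those expressions were written in the derivation of (\ref{Opt:MaxMinSDR}) from the unitarily transformed SNR constraints — together with the elementary identity $\{\mathbf{X}_1+\mathbf{X}_2:\mathbf{X}_1\succeq 0,\ \mathbf{X}_2\succeq 0\}=\{\mathbf{X}:\mathbf{X}\succeq 0\}$. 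I would close by noting the interpretation: the additional degree of freedom provided by the second beamformer of the Rank-2-AFMS is invisible to the semidefinite relaxation, which is precisely why the rank-two advantage must instead be harvested by a method that does not inflate the variable dimension, such as the CCCP algorithm developed in the sequel.
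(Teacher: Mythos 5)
Your proposal is correct and follows essentially the same route as the paper's own proof: both directions rest on observing that the constraints of (\ref{Opt:MaxMinSDR}) depend on $\mathbf{X}_1,\mathbf{X}_2$ only through their sum, mapping a feasible point of (\ref{Opt:MaxMinSDRAFMS}) to (\ref{Opt:MaxMinSDR}) by setting $\mathbf{X}_2=\mathbf{O}_{R+1}$ and conversely forming $\mathbf{X}_1+\mathbf{X}_2\succeq 0$. Your version merely phrases the argument as two inequalities between optimal values rather than directly asserting that solutions map to solutions, which is a slightly more careful but equivalent presentation.
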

\begin{proof}
Let $(\mathbf{X}_1^\star,t^\star,a^\star)$ be a solution to problem
(\ref{Opt:MaxMinSDRAFMS}), then
$(\mathbf{X}_1^\star,\mathbf{X}_2=\mathbf{O}_{R+1},t^\star,a^\star)$ is a
solution to problem
(\ref{Opt:MaxMinSDR}) as the
constraint functions in problem (\ref{Opt:MaxMinSDR}) only depend on the sum of
$\mathbf{X}_1$ and $\mathbf{X}_2$.
On the other hand, if $(\mathbf{X}_1^\star,\mathbf{X}_2^\star,t^\star,a^\star)$
is a solution
of problem (\ref{Opt:MaxMinSDR}), then
$(\mathbf{X}_1^\star+\mathbf{X}_2^\star,t^\star,a^\star)$ is a
solution of problem (\ref{Opt:MaxMinSDRAFMS}) as the sum of positive semidefinite
matrices results in a positive semidefinite matrix.
\end{proof}
As a consequence of the latter theorem, the SDR of $\cal{M}$
for the
Rank-1-AFMS given by problem (\ref{Opt:MaxMinSDRAFMS}) and the SDR of ${\cal M}$ for the
Rank-2-AFMS given by problem (\ref{Opt:MaxMinSDR}) are equivalent. For the Rank-2-AFMS,
a feasible solution for
$\mathcal{M}$ can be computed from the eigendecomposition of $\bX_1^\star$ if $\mathcal{R}(\mathbf{X_1}^\star)\leq 2$  as will be shown in the following.

Let $\mathbf{X_1}^\star $  be a solution of problem (\ref{Opt:MaxMinSDRAFMS}) with $\mathcal{R}(\mathbf{X_1}^\star)= 2$ and let
$\mathbf{X}_1^\star = \lambda_1
\mathbf{u}_1 \mathbf{u}_1^H + \lambda_2
\mathbf{u}_2 \mathbf{u}_2^H $ with non-zero eigenvalues  $\lambda_1$ and
$\lambda_2$ and the respective eigenvectors $\mathbf{u}_1 $ and $\mathbf{u}_2
$. Then, the solution $\mathbf{X}_1^\star $ of the
relaxed problem is a feasible global solution of $\mathcal{P}$. The
decomposition of
$\mathbf{X}_1^\star$ into two components is not unique and can be obtained from
any
vector pair $\tilde{\mathbf{w}}_1$ and $ \hat{\mathbf{w}}_2$ satisfying
$\mathbf{X}_1^\star =
\tilde{\mathbf{w}}_1 \tilde{\mathbf{w}}_1^H + \hat{\mathbf{w}}_2
\hat{\mathbf{w}}_2^H $. This is different from the Rank-1-AFMS approach
where a single beamforming vector
$\tilde{\mathbf{w}}_1$ is computed and where the SDR solution is only feasible if $\mathcal{R}(\mathbf{X_1}^\star) = 1$.  In the proposed
Rank-2-AFMS, the number of degrees of freedom is increased due to the
introduction of two linearly independent weight vectors.

We remark that problem (\ref{Opt:MaxMinSDRAFMS}) still represents a non-convex
problem, as the fractions of two linear terms in inequalities (\ref{constraint_pr_SDR}),
(\ref{constraint_PR_SDR}), and (\ref{Opt:PowerconstraintsSDR}) represent non-convex
functions. Moreover, the multiplication of the two linear terms in the
SNR constraints (\ref{Opt:MaxMinSDRAFMS}) results in a non-convex function.

\textcolor{black}{However, for constant $t$ and $a$, the { \it feasibility problem}
to compute a  feasible matrix ${\mathbf{X}}_1$  is a convex SDP. In \cite{multigroupMulticast}, the authors have proposed to perform a 1D bisection search to find
the optimum $t$ for a problem similar to problem (\ref{Opt:MaxMinSDRAFMS}) in the context of transmit
beamforming. In the latter search procedure, the feasibility problem is solved in each iteration.
Here, in order to solve problem (\ref{Opt:MaxMinSDRAFMS}), a 2D search
is required
since there is an additional power scaling factor $a$ involved. Note that the
optimum power
allocation factor $a^\star$ cannot be found by bisection search as it is
a-priori not possible to determine an appropriate search interval.}
\textcolor{black}{Therefore, we propose to perform grid search on $a$. For each point on the search grid, an additional bisection search according to \cite{multigroupMulticast} on the optimum $t$ is performed.
We refer to this search procedure as the SDR2D algorithm.}

One drawback of the SDR2D algorithm is its computational burden to solve an
SDP in each iteration of the exhaustive 2D search. Another drawback is that
$\mathcal{R}(\mathbf{X}_{1}^\star)>2$ in general. In this case, the
corresponding optimal value $t^\star$ is only a lower bound for $\mathcal{M}$ (within the
grid search precision) as
$\mathbf{X}_{1}^\star$ is not feasible for problem $\mathcal{P}$ since there
exists
no rank-two
decomposition $\mathbf{X}_1^\star =
\tilde{\mathbf{w}}_1 \tilde{\mathbf{w}}_1^H + \hat{\mathbf{w}}_2
\hat{\mathbf{w}}_2^H $.  For cases in which the SDR solution is not feasible for
the
original problem,  randomization techniques have been applied in
\cite{TransmitBeamforming}--\cite{10} to generate
feasible points that are suboptimal, in general.

Besides the SDR-based algorithms to compute vectors, recently,
iterative algorithms have been developed which outperform the SDR-based
randomization algorithms in terms of performance and computational complexity
\cite{MultiUserTwoWay}--\cite{Eusipco}, \cite{ERt}. The latter
algorithms perform an inner approximation of the original non-convex problem
and maintain the number of variables,
whereas in the SDR technique, the number of variables is roughly squared.

In the rest of this section, we develop an iterative algorithm which  computes
the weight vector and adjusts the source power
to maximize the minimum received SNR at the destinations.

\subsection{Convex inner approximation technique} \label{sec:SNR_Maximization}
In the max-min fairness beamforming problem $\mathcal{M}$,
the left hand side of the constraints (\ref{SNR-Constraint}) is the
difference of $({\mathbf{w}^H
\mathbf{R}_m \mathbf{w} + \sigma_\nu^2})/{t}$ and $
({\mathbf{w}^H (\mathbf{Q}_m
+(|d_m|^2 /\sigma_\nu^2) \mathbf{R}_m)
 \mathbf{w} + |d_m|^2 })/{a}$. As $\mathbf{R}_m$ and $\mathbf{Q}_m$ are
positive-semidefinite matrices, these functions are both convex since they
consist of a convex  quadratic form divided by a linear term
and a constant divided by a linear term \cite{Boyd}. Therefore, the left hand
side
of inequality
(\ref{SNR-Constraint}) is a difference of two convex (DC) functions
and $\mathcal{M}$ belongs to the class of DC programs
\cite{Horst}--\cite{CCCP4}.

To solve $\mathcal{M}$ approximately, we propose an iterative algorithm
which generates a sequence of weight vectors $\mathbf{w}^{(k)}$, with iteration index $k \in \mathbb{N}_0$. At iteration $k+1$, the subsequent vector $\mathbf{w}^{(k+1)}$ of
$\mathbf{w}^{(k)}$ is generated according to
\begin{equation}
\mathbf{w}^{(k+1)} \triangleq\mathbf{w}^{(k)} + \Delta \mathbf{w}  \label{map1}
\end{equation}
using the update vector $\Delta \mathbf{w}$ as further specified below. Similarly, $a^{(k)}$ and $t^{(k)}$
are the
optimization variables $a$ and $t $ at the $k$th iteration, respectively. They
are  updated according to
\begin{equation}
a^{(k+1)} \triangleq  a^{(k)}+\Delta a, ~~t^{(k+1)} \triangleq
t^{(k)} +\Delta t, \label{map2}
\end{equation}
where $\Delta a$ and $\Delta t$  are the update variables as given below.

Let us assume  that $ \mathbf{w}^{(k)} $, $  a^{(k)}$, and $t^{(k)} $ represent fixed feasible points of problem $\mathcal{M}$
and let $\Delta \mathbf{w}$, $\Delta a$, and $ \Delta t$ denote optimization
variables.
If we replace $ \mathbf{w}$ in $\mathcal{M}$ by $
\mathbf{w}^{(k)} +\Delta \mathbf{w}$, $a$ by $a^{(k)}+\Delta a$, and $t$ by $t^{(k)}+\Delta t$, the power constraints defined by the set $\Omega $ remain convex. To
derive a convex approximation of the constraints (\ref{SNR-Constraint})
let us replace the concave part by its first order Taylor approximation around
$(\mathbf{w}^{(k)},a^{(k)},t^{(k)})$, resulting in the convex constraint
\begin{align}
& \bar{\lambda}_m^{(k)}(\Delta \mathbf{w},\Delta a, \Delta t) \triangleq
\frac{( \mathbf{w}^{(k)} +\Delta
\mathbf{w})^H
\mathbf{R}_m ( \mathbf{w}^{(k)}+\Delta \mathbf{w}) + \sigma_\nu^2}{t^{(k)} +
\Delta t}  \nonumber \\
&+\frac{\mathbf{w}^{(k)H}
(\mathbf{Q}_m
\!+\!(|d_m|^2 /\sigma_\nu^2)\mathbf{R}_m) \mathbf{w}^{(k)}\! + \!|d_m|^2
}{a^{(k)}} \!\cdot
\!\left(\frac{\Delta a}{a^{(k)}} -1 \right) \nonumber \\
\relax&  - \frac{2 \Re\{\Delta \mathbf{w}^H
(\mathbf{Q}_m
\!+\!(|d_m|^2 /\sigma_\nu^2)\mathbf{R}_m)
\mathbf{w}^{(k)}\} }{a^{(k)}} \leq 0. \label{ConvexifiedConstraint}
  \end{align}
In \cite{MultiUserTwoWay}, a similar approximation has been used for
 max-min fair beamforming in bi-directional relay networks. The
approximation made in inequality (\ref{ConvexifiedConstraint}) is however tighter than that of \cite{MultiUserTwoWay} in the
sense that  unlike the approach of \cite{MultiUserTwoWay}, the convex
quadratic-over-linear
terms are not linearized. In \cite{30}, a linearization approach for max-min
fair beamforming in the context of cognitive radio networks has been proposed.
The latter approach assumes a centralized system and therefore does not
consider the problem of power allocation which arises in our distributed
beamforming application. Power allocation and beamforming optimization have been
addressed in \cite{Yong}, where a CCCP algorithm has been proposed to minimize
the transmit power in a cooperative relay network. An iterative algorithm
for power minimization in a Rank-2-AFMS has been proposed in our accompanying conference paper \cite{Eusipco}.
In the referenced work, however, direct source-destination channels and power allocation
have not been regarded.

\textcolor{black}{
Comparing inequalities (\ref{SNR-Constraint}) and (\ref{ConvexifiedConstraint}), we find that
\begin{equation}
{\lambda}_m( \mathbf{w}^{(k)} +\Delta \mathbf{w}, a^{(k)} +\Delta a, t^{(k)} + \Delta t)  \leq \bar{\lambda}_m^{(k)}(\Delta \mathbf{w},\Delta a, \Delta t),
\label{InnerApproxProp}
\end{equation}
which is a consequence of the linearization of the concave part of ${\lambda}_m( \mathbf{w},  a, t)$, see Section 3.1.3 in \cite{Boyd}.
Therefore, the convex problem
\begin{align}
 \min_{\Delta \mathbf{w},\Delta a, \Delta t} & t^{(k)} + \Delta t
\nonumber \\ \relax   {\rm
s.t.}\quad &  \bar{\lambda}_m^{(k)}(\Delta \mathbf{w},\Delta a, \Delta t) \leq
0~~\forall
m\in\{1,\dots,M\},
\nonumber \\ \relax
&  t^{(k)} +\Delta t > 0, \nonumber \\ \relax
& ( \mathbf{w}^{(k)}+\Delta \mathbf{w},a^{(k)}+\Delta a) \in \Omega,
\label{Opt:SNR_MaximizationConvexified}
\end{align}
represents an inner approximation of $\mathcal{M}$.
}
\newtheorem{bearnie}[Satz]{Corollary}
\begin{bearnie}
Let $(\mathbf{w}^{(k)}, a^{(k)}, t^{(k)})$ be feasible for $\mathcal{M}$. The updated variables
$(\mathbf{w}^{(k+1)}, a^{(k+1)}, t^{(k+1)})$, obtained from a solution
$(\Delta \mathbf{w}^\star,\Delta a^\star,\Delta t^\star)$ to problem
(\ref{Opt:SNR_MaximizationConvexified}), are feasible for $\mathcal{M}$ and
$t^{(k+1)} \leq
t^{(k)} $.
\end{bearnie}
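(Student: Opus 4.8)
The plan is to prove the two claims separately, leaning on the inner-approximation inequality (\ref{InnerApproxProp}) for the feasibility part and on the observation that the zero increment is admissible for the convex subproblem for the monotonicity part. No deep machinery is needed; the statement is essentially a bookkeeping consequence of how the convexified constraint (\ref{ConvexifiedConstraint}) was constructed.

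First I would dispatch feasibility. Since $(\Delta \mathbf{w}^\star,\Delta a^\star,\Delta t^\star)$ solves (\ref{Opt:SNR_MaximizationConvexified}), it satisfies $\bar{\lambda}_m^{(k)}(\Delta \mathbf{w}^\star,\Delta a^\star,\Delta t^\star)\le 0$ for all $m$, the strict inequality $t^{(k)}+\Delta t^\star>0$, and the membership $(\mathbf{w}^{(k)}+\Delta \mathbf{w}^\star,a^{(k)}+\Delta a^\star)\in\Omega$. Reading these through the update rules (\ref{map1})--(\ref{map2}), the latter two statements are exactly $t^{(k+1)}>0$ and $(\mathbf{w}^{(k+1)},a^{(k+1)})\in\Omega$, so all power constraints of $\mathcal{M}$ hold. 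For the SNR constraints I would invoke (\ref{InnerApproxProp}) at the optimal increment to get $\lambda_m(\mathbf{w}^{(k+1)},a^{(k+1)},t^{(k+1)})\le \bar{\lambda}_m^{(k)}(\Delta \mathbf{w}^\star,\Delta a^\star,\Delta t^\star)\le 0$; since (\ref{SNR-Constraint}) is the SNR constraint of $\mathcal{M}$ rewritten, this shows $(\mathbf{w}^{(k+1)},a^{(k+1)},t^{(k+1)})$ is feasible for $\mathcal{M}$.

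For the monotonicity $t^{(k+1)}\le t^{(k)}$, the key step is to check that the trivial increment $(\Delta \mathbf{w},\Delta a,\Delta t)=(\mathbf{0}_{R+1},0,0)$ is feasible for (\ref{Opt:SNR_MaximizationConvexified}). Substituting zeros into the right-hand side of (\ref{ConvexifiedConstraint}), the term linear in $\Delta \mathbf{w}$ vanishes and the factor $(\Delta a/a^{(k)}-1)$ reduces to $-1$, so $\bar{\lambda}_m^{(k)}(\mathbf{0}_{R+1},0,0)$ collapses precisely to $\lambda_m(\mathbf{w}^{(k)},a^{(k)},t^{(k)})$; i.e., the first-order expansion is tight at the expansion point. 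Because $(\mathbf{w}^{(k)},a^{(k)},t^{(k)})$ is feasible for $\mathcal{M}$ by hypothesis, this quantity is $\le 0$, and the remaining requirements $t^{(k)}>0$ and $(\mathbf{w}^{(k)},a^{(k)})\in\Omega$ also hold. Hence $(\mathbf{0}_{R+1},0,0)$ is feasible and attains objective value $t^{(k)}$, and optimality of $(\Delta \mathbf{w}^\star,\Delta a^\star,\Delta t^\star)$ forces $t^{(k)}+\Delta t^\star\le t^{(k)}$, which is $t^{(k+1)}\le t^{(k)}$.

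The only point that warrants care is the ``tightness at the current iterate'' computation $\bar{\lambda}_m^{(k)}(\mathbf{0}_{R+1},0,0)=\lambda_m(\mathbf{w}^{(k)},a^{(k)},t^{(k)})$: this is what makes the previous iterate admissible for the subproblem and is what drives the descent in $t$. I would also remark in passing that this same feasibility of the zero increment guarantees that (\ref{Opt:SNR_MaximizationConvexified}) is solvable at every iteration, so the recursion (\ref{map1})--(\ref{map2}) is well defined and produces a nonincreasing sequence $\{t^{(k)}\}$ of feasible points of $\mathcal{M}$.
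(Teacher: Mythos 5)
Your proposal is correct and follows essentially the same route as the paper: feasibility of the updated point via the inner-approximation inequality (\ref{InnerApproxProp}) together with the explicit constraints $t^{(k)}+\Delta t>0$ and $(\mathbf{w}^{(k)}+\Delta\mathbf{w},a^{(k)}+\Delta a)\in\Omega$ of the subproblem, and monotonicity $t^{(k+1)}\leq t^{(k)}$ from feasibility of the zero increment, which you justify by the (correct) tightness computation $\bar{\lambda}_m^{(k)}(\mathbf{0},0,0)=\lambda_m(\mathbf{w}^{(k)},a^{(k)},t^{(k)})$ that the paper leaves implicit. The only nit is dimensional: since $\mathbf{w}=[\tilde{\mathbf{w}}_1^T,\tilde{\mathbf{w}}_2^T]^T$, the zero increment should be $\mathbf{0}_{2R+2}$ rather than $\mathbf{0}_{R+1}$.
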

\begin{proof}
$\Delta t^\star \leq 0$ and consequently $t^{(k+1)} \leq
t^{(k)} $ hold true as $(\Delta \mathbf{w} = \mathbf{0},\Delta a=0,\Delta t=0)$  is feasible for problem (\ref{Opt:SNR_MaximizationConvexified}) since $(\mathbf{w}^{(k)}, a^{(k)}, t^{(k)})$ is feasible. From inequality (\ref{InnerApproxProp}) follows  ${\lambda}_m( \mathbf{w}^{(k+1)}, a^{(k+1)}, t^{(k+1)} )\leq 0$ and, consequently, the SNR constraints are fulfilled.  As $t^{(k+1)} >0$ and
$ (\mathbf{w}^{(k+1)},a^{(k+1)}) \in \Omega$ are ensured according to problem (\ref{Opt:SNR_MaximizationConvexified}), the feasibility of $(\mathbf{w}^{(k+1)}, a^{(k+1)}, t^{(k+1)})$ for $\mathcal{M}$ is guaranteed.
\end{proof}

The property $1/t^{(k+1)} \geq
1/t^{(k)} $ ensures that the minimum SNR increases or
remains unchanged in each iteration.
Repeatedly solving problem (\ref{Opt:SNR_MaximizationConvexified}) for $k=0,1,2,\dots $
creates a monotonically non-decreasing sequence of minimum SNR values
$1/t^{(0)}
\leq 1/t^{(1)} \leq 1/t^{(2)} \dots $ with feasible weight vectors  and a feasible
transmit power. \textcolor{black}{Interestingly, we can show that the latter update iteration does not exhibit divergence or oscillation. \textcolor{black}{Moreover, the sequence $(\mathbf{w}^{(k)}, a^{(k)}, t^{(k)})$ converges globally, i.e., for every feasible start point, to a stationary point of $\mathcal{M}$ that satisfies the
Karush-Kuhn-Tucker (KKT) conditions. \footnote{For points where constraint qualification holds, the KKT conditions are necessary for local optimality.
Note that for convex optimization problems, the KKT conditions are sufficient to guarantee global optimality if Slater's constraint qualification is satisfied \cite{Boyd}.} For non-convex optimization problems, a KKT point can be a local optimum, a global optimum, a maximum, or a saddle point \cite{Boyd}. It is easy to verify that for $\mathcal{M}$, a KKT point cannot be a maximum: For any point $( \mathbf{w}, a,  t)$, reducing the source power by choosing a larger power scaling factor $a$ leads to a larger target function value $t$. We remark that it is in general NP-hard to prove the local optimality of a KKT point \cite{localopt}. Even though a proof of optimality of a KKT point is difficult  to derive, the search for KKT points has been found useful to solve non-convex optimization problems \cite{Vanderbei}, \cite{Gill}.
}
\newtheorem{bearnie2}[Satz]{Theorem}
\begin{bearnie2}
Let us assume that the power factor $a$ is bounded by $a_{\max} \geq a$. Then, for any feasible initial point $(\mathbf{w}^{(0)}, a^{(0)}, t^{(0)})$, the sequence $(\mathbf{w}^{(k)}, a^{(k)}, t^{(k)})$ converges to a stationary point.
\end{bearnie2}
\begin{proof}
According to Theorem 10 of \cite{CCCP},  the sequence $(\mathbf{w}^{(k)}, a^{(k)}, t^{(k)})$ {\it globally} converges if the mapping $(\mathbf{w}^{(k)}, a^{(k)}, t^{(k)})\rightarrow (\mathbf{w}^{(k+1)}, a^{(k+1)}, t^{(k+1)})$ of the definitions (\ref{map1}) and (\ref{map2}) is uniformly compact. This is the case if the feasible set of $\mathcal{M}$ is compact \cite{CCCP}. We show that the variables lie inside a compact set, i.e., a set that is closed and bounded: Using our assumption and the constraint on the source power (\ref{constraint:Source_Power}), $a$ lies in the interval  $ 2/P_{S,\max} \leq a \leq a_{\max} $ which is closed and bounded. Moreover, due to the power constraints (\ref{constraint:Ind_Power}) - (\ref{Opt:Powerconstraints2}), it is clear that the set of feasible weight vectors $\mathbf{w}$ is bounded and closed. $t$ lies in the  closed and bounded interval $t_{\rm opt} \leq t \leq t^{(0)} $, where $t_{\rm opt}$ is the (unknown) optimum value.
\end{proof}
For the latter proof, we have assumed that $  a \leq a_{\max} $ which implies that the source power during the first two time slots does not vanish.
 Adding $  a \leq a_{\max} $ as an additional constraint to $\mathcal{M}$ is not a critical requirement provided that  $a_{\max}$ is chosen sufficiently large. Then, the approximated problem (\ref{Opt:SNR_MaximizationConvexified})
becomes
\begin{align}
 \min_{\Delta \mathbf{w},\Delta a, \Delta t} & t^{(k)} + \Delta t
\nonumber \\ \relax   {\rm
s.t.}\quad &  \bar{\lambda}_m^{(k)}(\Delta \mathbf{w},\Delta a, \Delta t) \leq
0~~\forall
m\in\{1,\dots,M\},
\nonumber \\ \relax
&  a^{(k)} +\Delta a \leq a_{\max}, \nonumber \\ \relax
&  t^{(k)} +\Delta t > 0, \nonumber \\ \relax
& ( \mathbf{w}^{(k)}+\Delta \mathbf{w},a^{(k)}+\Delta a) \in \Omega.
\label{Opt:SNR_MaximizationConvexified2}
\end{align}}

The globally convergent Max-Min-CCCP algorithm outlined in Algorithm 1 starts at a random point and
iterates until
the
relative progress $\rho \triangleq |t^{(k)} - t^{(k+1)}|/t^{(k)}$ falls below
the
threshold
value
$\varepsilon.$
\\
\begin{algorithm}
\DontPrintSemicolon
\Begin{
Set $k:=0$. Create and scale random $\mathbf{w}^{(0)} \neq
\mathbf{0}_{2R+2}$ and $a^{(0)} \neq 0 $ such that $ ( \mathbf{w}^{(0)},a^{(0)}) \in \Omega$. Compute $t^{(0)} = 1/(\min {\rm SNR}_m)$.  Set $\rho >
\varepsilon$. \\
  \While{$\rho > \varepsilon$}{$k:=k+1.$ \\Compute $(\Delta
\mathbf{w}^\star,\Delta
a^\star,\Delta t^\star)$ by solving problem
(\ref{Opt:SNR_MaximizationConvexified2}), using
$(\mathbf{w}^{(k)},a^{(k)},t^{(k)})$, update $\mathbf{w}^{(k+1)} =
\mathbf{w}^{(k)}+ \Delta
\mathbf{w}^\star$, $a^{(k+1)} = a^{(k)}+ \Delta
a^\star$ and $t^{(k+1)} = t^{(k)}+ \Delta
t^\star$.\\
$\rho:= |t^k - t^{k+1}|/t^k.
$}}
\Return $\mathbf{w}^{(k)},a^{(k)},t^{(k)}$
  \caption{Max-Min-CCCP} \label{Alg2}
\end{algorithm}

As the proposed algorithm is based on the linearization of the
concave (negative convex) functions of a DC program, it belongs to
the class of CCCP algorithms \cite{CCCP}.

Note that in our algorithm, the problem
(\ref{Opt:SNR_MaximizationConvexified2}) is solved exactly. To reduce the
computational cost, it is possible to use an inaccurate solution \cite{ERt}. A
detailed description of an implementation is
beyond the scope of this work. In our simulations, the subproblems that arise in
every iteration of our proposed algorithm are solved exactly.

\section{Simulation Results}
To evaluate the performance of the proposed scheme under realistic conditions, we consider a relay
network, where the
coefficients of the source-relay, relay-destination and source-destination channels
model an urban micro scenario \cite{Baum}.
The system parameters are chosen according to the {L}ong
{T}erm
{E}volution (LTE) standard for mobile communication \cite{4G}. The system
is operated at
a carrier
frequency
of $1800$ MHz and we choose $T_S\! =\!66.7\mu {\rm sec} $ as
the duration of one time slot. The bandwidth is given by $1/T_S$  which
corresponds to the
bandwidth of a subcarrier in a multi-carrier LTE system. We further assume frequency flat fading
channels. We create the channel coefficients such that there is no
shadow fading from the
source to the relays but from the source and
relays to the destinations. The noise power is set to
$\sigma_\nu^2\!= \sigma_\eta^2=\!-132$ dBm. The maximum transmit power
values are chosen
as
$P_{T,\max}$, $ P_{S,\max} = P_{T,\max}/2$, $ P_{R,\max} = P_{T,\max}/3$, and $
p_{r,\max} = P_{T,\max}/15$, respectively.

In the network, $R=10$ relays are  placed at
a distance of 250 meters around the source at
equidistant angles. The destinations are randomly
distributed in between 600 and 800 meters around the source, see Fig.
\ref{Setup}. The source, the relays, and the destinations are placed at a height
of 10, 5, and 1.5 meters, respectively.

We investigate the performance of the following transmission schemes, scenarios,
and algorithms: The Rank-2-AFMS combined with the
Max-Min-CCCP algorithm (R2-Max-Min-CCCP) and the SDR2D algorithm (R2-SDR2D).
Furthermore, we consider the Rank-1-AFMS
combined with the CCCP algorithm (R1-Max-Min-CCCP) and the SDR2D algorithm
(R1-SDR2D). Moreover, we investigate also direct source-destination
(DSD)
communication, where the relays are not involved.
All results are compared with
the theoretical upper bound (SDR2D-UB) obtained by the SDR2D algorithm, see
Subsection~\ref{sR2}.

For the SDP-feasibility problems, which are solved in the  SDR2D algorithm to find a solution of problem (\ref{Opt:MaxMinSDRAFMS}), and to solve
the subproblems
(\ref{Opt:SNR_MaximizationConvexified2}) of the Max-Min-CCCP algorithm, we have
utilized the cvx interface for convex programming in a Matlab environment
\cite{cvx}. As the solver, we have chosen Mosek 7.0.0.103 under the default precision \cite{Mosek}. For the SDR2D algorithm, we search
over 200 grid points to determine the power scaling factor $a$ and select $\varepsilon = 10^{-2}$ as the precision
in the bisection search to determine the optimum $t$. We chose $\varepsilon$ also as the threshold value
for the relative progress of the Max-Min-CCCP algorithm. We select $a_{\max}=2\cdot 10^{6}/P_{S,\max}$ in the problem (\ref{Opt:SNR_MaximizationConvexified2})
that is solved in every iteration of the Max-Min-CCCP algorithm.

 In the case that the solution matrix to the optimization problem
(\ref{Opt:MaxMinSDRAFMS}) obtained by the SDR2D algorithm is not feasible for
$\mathcal{M}$, we apply the Gaussian randomization procedure
of \cite{Xin}.
For the Rank-1-AFMS, we select the
best out of
200 random vectors, where each vector is properly scaled to meet the power
constraints. For the Rank-2-AFMS, we
create 200 random vector pairs. Each pair is properly scaled by solving a linear
program using the function \texttt{linprog.m} of the Matlab optimization
toolbox \cite{Matlab}.

\begin{figure}
	\psfrag{Transmitter}{source}
	\psfrag{Relays}{relays}
	\psfrag{Receivers}{destinations}
	\psfrag{destinations}{\bf Destinations}
	\psfrag{f}{\small $\bff$}
	\psfrag{g1}{\small $\bg_1$}
	\psfrag{g2}{\small $\bg_M$}
	\psfrag{g_N}{\small $\bg_N$}
\psfrag{r1}{\small $r_1$}
	\psfrag{r2}{\small $r_2$}
  \centerline{\epsfig{figure = 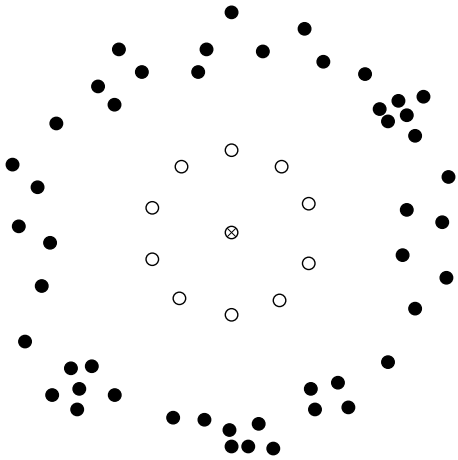,width=6.0cm,height = 6cm}}
\caption{Setup of the network; $\otimes$: source, $\circ $:
relay, $\bullet$: destination.} \label {Setup}
\end{figure}

In our first example, we examine the performance of all setups in
terms of the average minimum achieved rate by solving $\mathcal{M}$. The minimum
achieved rate is given by $(1/2){\rm log}_2(1+ {\rm SNR})$ (the prefactor
$1/2$ takes
into account the time slots per communicated symbol) for communications with
relays and by ${\rm log}_2(1+ {\rm SNR})$ for DSD, as one symbol per time
slot can be communicated.

Fig.~\ref{Fig:FirstExamplePlot1} depicts the average minimum rate versus
$P_{T,\max} $ in the case that the direct source-destination channels are
exploited for $M=100$ destinations. In our simulation results depicted in
Fig.~\ref{Fig:FirstExamplePlot1b} it is assumed that no direct
source-destination
channels exist.

Both figures demonstrate that R2-Max-Min-CCCP achieves near-optimum
performance close to the theoretical upper bound. The Rank-1-AFMS is clearly
outperformed
by the Rank-2-AFMS. Comparing Fig.~\ref{Fig:FirstExamplePlot1} with
Fig.~\ref{Fig:FirstExamplePlot1b}, we observe that the performance improves
significantly if the source-destination channels are exploited.

Fig.~\ref{Fig:FirstExamplePlot2} depicts the average minimum rate versus the
number of destinations $M$
in the case that the direct source-destination channels are
exploited for $P_{T,\max} =$5dBm. In Fig.~\ref{Fig:FirstExamplePlot2b} the same
setups are considered, assuming that
no direct source-destination
channels exist.

Both Fig.~\ref{Fig:FirstExamplePlot2} and Fig.~\ref{Fig:FirstExamplePlot2b}
demonstrate that R2-Max-Min-CCCP achieves
near-optimum
performance close to the upper bound. For small $M$, the SDR2D algorithm
performs slightly better than the Max-Min-CCCP algorithm due to the fact, that
in most cases,
the
respective solution matrix has a rank smaller or equal to two. As illustrated
in Table~\ref{table1}, the rank of the SDR solution matrix increases with
increasing $M$. This leads to suboptimal solutions generated by the
randomization technique.

\begin{table}
\begin{center}
\begin{tabular}{| c || c | c | c| c | c|}
\hline
$M$ & 10& 40 &70 &100& 130 \\
Average $\mathcal{R}(\mathbf{X}^\star_1)$  &  1.8 & 2.5 & 3.0 & 3.3 & 3.4 \\
\hline
\end{tabular}
\end{center}
\caption{Average rank $\mathcal{R}(\mathbf{X}^\star_1)$ versus number of
destinations
$M$.} \label{table1}
\end{table}

In our second example, we examine the computational aspects of the considered
algorithms for $M=100$ destinations and $P_{T,\max} =$20dBm. Table \ref{table2} shows the average
runtime in seconds
of the
different algorithms.  We observe that the Max-Min-CCCP algorithm is more than
ten times faster than the SDR2D algorithm. Fig.~\ref{Fig:FirstExamplePlot3}
depicts the number of iterations of the Max-Min-CCCP algorithm for the
Rank-2-AFMS versus the minimum SNR.  \textcolor{black}{ In each simulation run, we have initialized the R2-Max-Min-CCCP method with ten different starting points. In Fig.~\ref{Fig:FirstExamplePlot3}, the highest (R2-Max-Min-CCCP H) and the lowest minimum SNR (R2-Max-Min-CCCP L) achieved within these ten optimization runs of the Max-Min-CCCP algorithm are depicted for each iteration.  As a comparison to the CCCP algorithm, the performance of the
SDR-based approaches is depicted. To analyze the influence of the randomization procedure on the performance of the SDR2D algorithm, we increase the number of random vectors from 200 to 1000 for the R1-SDR2D method (R1-SDR2D 1000) and for the R2-SDR2D method (R2-SDR2D 1000). As expected, the increased number of random vectors leads to a higher minimum SNR, however, the R2-Max-Min-CCCP still outperforms the SDR-based designs. Furthermore, we observe that the main progress of the Max-Min-CCCP algorithm is achieved within the first three iterations. In the following iterations,  the gain that is obtained is less than 1dB. Since  the Max-Min-CCCP algorithm exhibits excellent performance within  a few iterations it is suitable for real time applications. The difference between the highest minimum SNR and the lowest minimum SNR achieved with ten different starting points decreases with the number of iterations. For more than nine iterations, the difference is less than 0.5 dB. We remark from this observation that the performance can be improved if the R2-Max-Min-CCCP is initialized with different starting points.}

\begin{table}
\begin{center}
\textcolor{black}{\begin{tabular}{| c || c |  c| c | c|}
\hline
\!\! \!\!\textbf{Algorithm} \!\!\!\! & \!\!\!\! SDR2D,
R1-Rand\!\!\!\! &\!\! \!\! \!\!SDR2D, R2-Rand\!\!\!\! \!\!& \!\! \!\!R1-CCCP \!\!\!\! &
\!\!\!\! R2-CCCP \!\!\!\! \\
\!\!\textbf{Runtime} \!\! &  $1.0 \cdot 10^3$ &$ 1.1 \cdot 10^3$ & 81 & 83 \\
\hline
\end{tabular}}
\end{center}
\caption{Average runtime in seconds. R1-Rand:
Rank-one randomization, R2-Rand: Rank-two randomization, R1-CCCP:
R1-Max-Min-CCCP,  R2-CCCP: R2-Max-Min-CCCP.} \label{table2}
\end{table}

\begin{figure}
\psfrag{R2AFMF+CCCP}{\scriptsize{R2-Max-Min-CCCP}}
\psfrag{R1AFMF+CCCP}{\scriptsize{R1-Max-Min-CCCP}}
\psfrag{R2AFMF+SDR2D+LB}{\scriptsize{SDR2D-UB}}
\psfrag{R2AFMF+SDR2D}{\scriptsize{R2-SDR2D}}
\psfrag{R1AFMF+SDR2D}{\scriptsize{R1-SDR2D}}
\psfrag{DSD}{\scriptsize{DSD}}
\psfrag{XXXXXXXXXMinimal Rate (bits/time slots/Hz)}{\small{~~Minimum rate
(bits/time
slot/Hz)}}
\psfrag{Overall Power PTmax}{\small{\!\!\!\!\! Total power $P_{T,\max} $ (dBm)}}
\centering
\centerline{\epsfig{figure=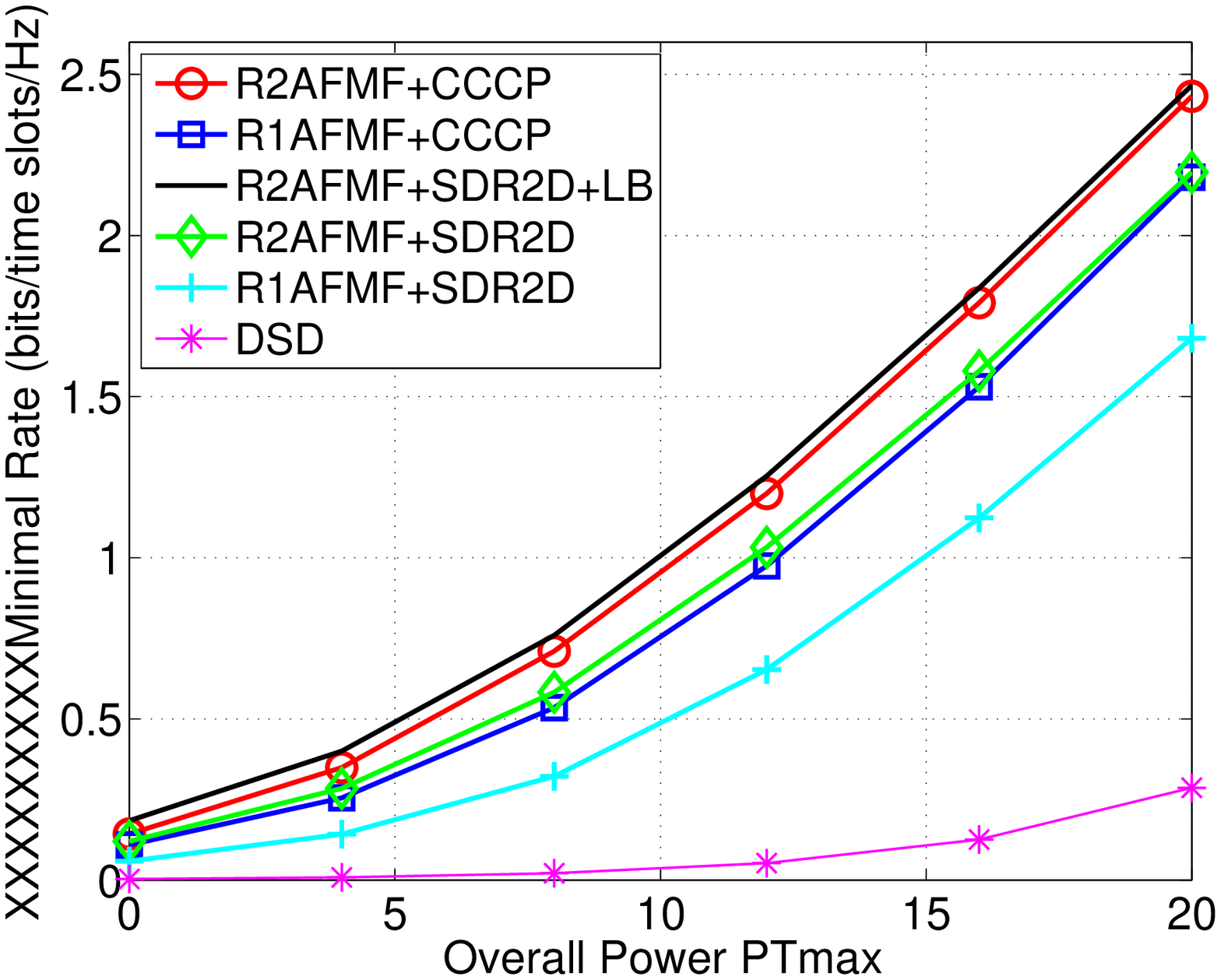,scale=0.42}}
\caption{Minimum rate versus total power $P_{T,\max} $, first example.}
\label{Fig:FirstExamplePlot1}
\end{figure}

\begin{figure}
\psfrag{R2AFMF+CCCP}{\scriptsize{R2-Max-Min-CCCP}}
\psfrag{R1AFMF+CCCP}{\scriptsize{R1-Max-Min-CCCP}}
\psfrag{R2AFMF+SDR2D+LB}{\scriptsize{SDR2D-UB}}
\psfrag{R2AFMF+SDR2D}{\scriptsize{R2-SDR2D}}
\psfrag{R1AFMF+SDR2D}{\scriptsize{R1-SDR2D}}
\psfrag{direct path only}{\scriptsize{Direct communications without relays}}
\psfrag{XXXXXXXXXMinimal Rate (bits/time slots/Hz)}{\small{~~Minimum rate
(bits/time
slot/Hz)}}
\psfrag{Overall Power PTmax}{\small{\!\!\!\!\! Total power $P_{T,\max} $ (dBm)}}
\centering
\centerline{\epsfig{figure=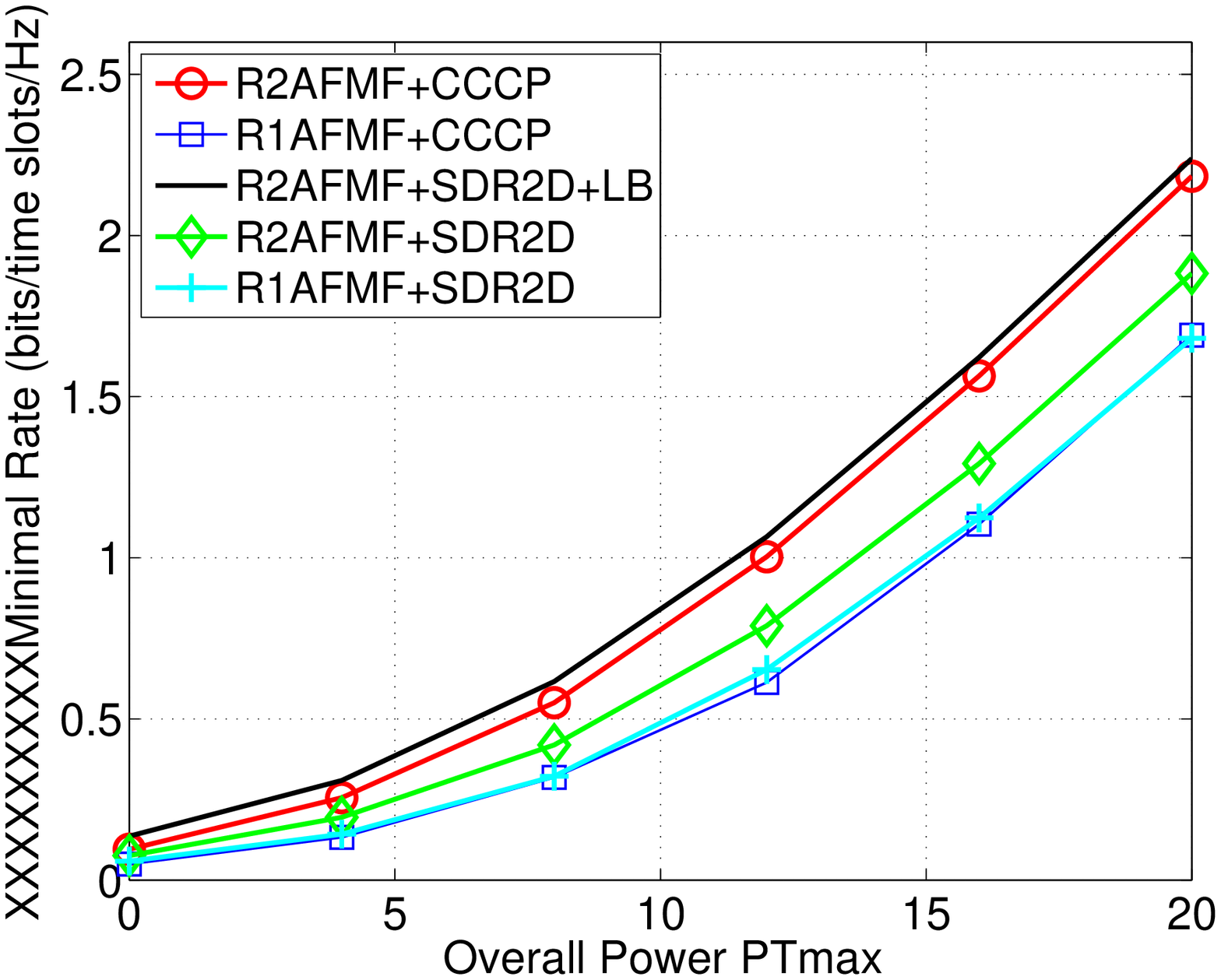,scale=0.42}}
\caption{Minimum rate versus total power $P_{T,\max} $, first example.}
\label{Fig:FirstExamplePlot1b}
\end{figure}

\begin{figure}
\psfrag{R2AFMF+CCCP}{\scriptsize{R2-Max-Min-CCCP}}
\psfrag{R1AFMF+CCCP}{\scriptsize{R1-Max-Min-CCCP}}
\psfrag{R2AFMF+SDR2D+LB}{\scriptsize{SDR2D-UB}}
\psfrag{R2AFMF+SDR2D}{\scriptsize{R2-SDR2D}}
\psfrag{R1AFMF+SDR2D}{\scriptsize{R1-SDR2D}}
\psfrag{DP}{\scriptsize{DSD}}
\psfrag{direct path only}{\scriptsize{Direct communications without relays}}
\psfrag{XXXXXXXXXMinimal Rate (bits/time slots/Hz)}{\small{~~Minimum rate
(bits/time
slot/Hz)}}
\psfrag{Number of Users M}{\small{\!\!\!\!\!\!\!\!\!\!\!\!\!\!\!Number of
destinations $M$}}
\centering
\centerline{\epsfig{figure=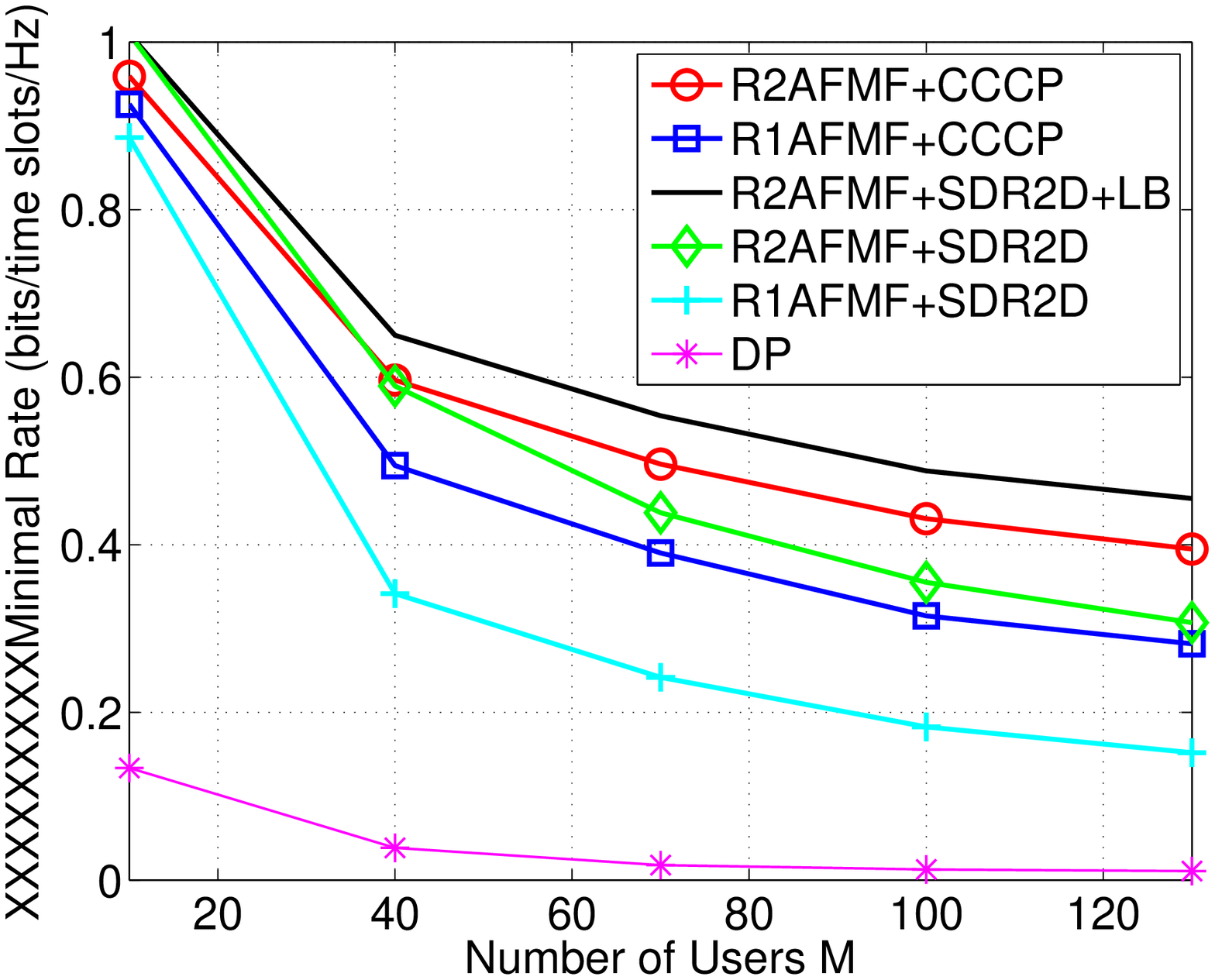,scale=0.42}}
\caption{Minimum rate versus number of destinations $M$, first example.}
\label{Fig:FirstExamplePlot2}
\end{figure}

\begin{figure}
\psfrag{R2AFMF+CCCP}{\scriptsize{R2-Max-Min-CCCP}}
\psfrag{R1AFMF+CCCP}{\scriptsize{R1-Max-Min-CCCP}}
\psfrag{R2AFMF+SDR2D+LB}{\scriptsize{SDR2D-UB}}
\psfrag{R2AFMF+SDR2D}{\scriptsize{R2-SDR2D}}
\psfrag{R1AFMF+SDR2D}{\scriptsize{R1-SDR2D}}
\psfrag{direct path only}{\scriptsize{Direct communications without relays}}
\psfrag{XXXXXXXXXMinimal Rate (bits/time slots/Hz)}{\small{~~Minimum rate
(bits/time
slot/Hz)}}
\psfrag{Number of Users M}{\small{\!\!\!\!\!\!\!\!\!\!\!\!\!\!\!Number of
destinations $M$}}
\centering
\centerline{\epsfig{figure=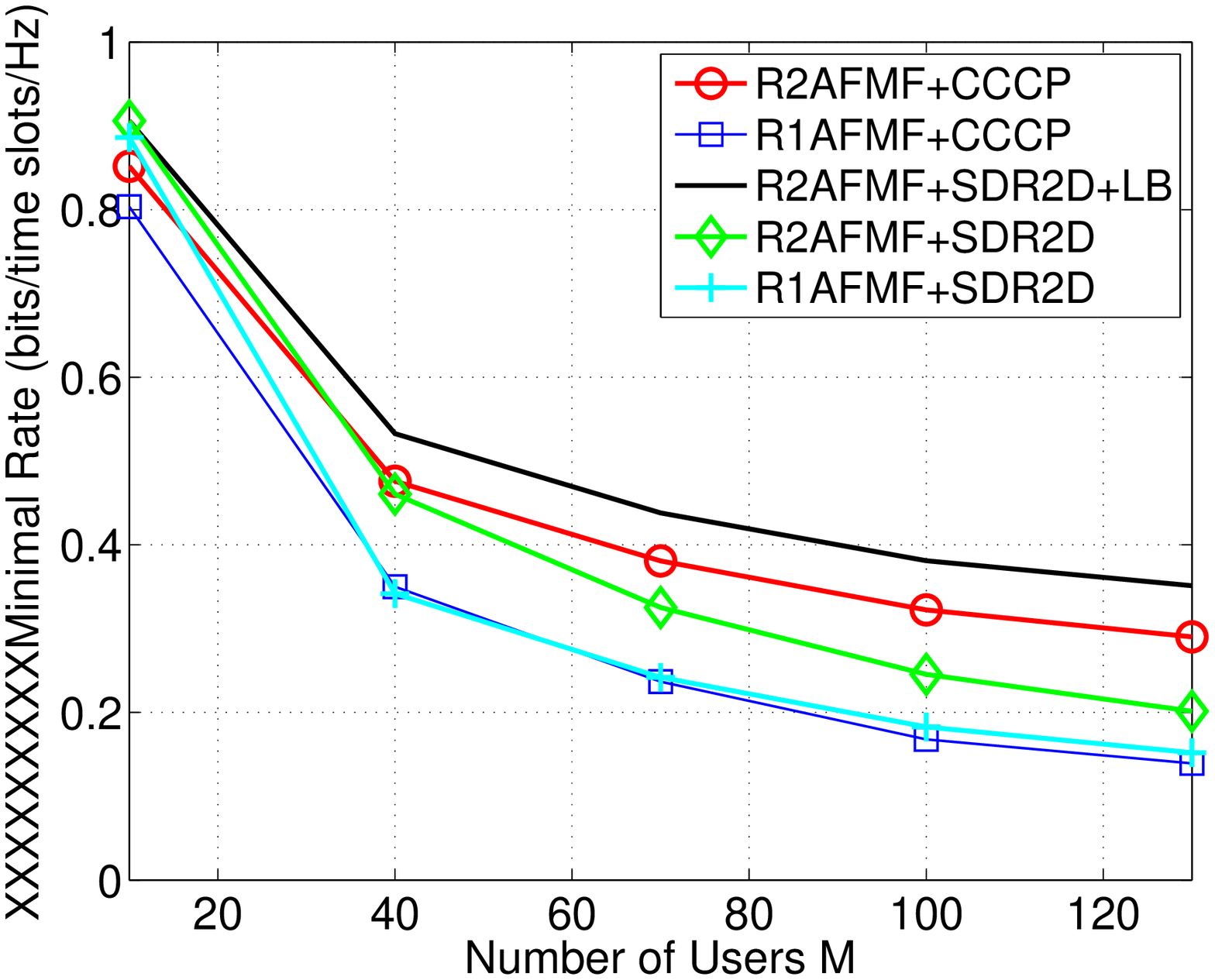,scale=0.42}}
\caption{Minimum rate versus number of destinations $M$, first example.}
\label{Fig:FirstExamplePlot2b}
\end{figure}

\begin{figure}
\psfrag{R2AFMF+CCCP}{\scriptsize{R2-Max-Min-CCCP}}
\psfrag{R1AFMF+CCCP}{\scriptsize{R1-Max-Min-CCCP}}
\psfrag{R2AFMF+SDR2D+LB}{\scriptsize{SDR2D-UB}}
\psfrag{R2AFMF+SDR2D}{\scriptsize{R2-SDR2D}}
\psfrag{R1AFMF+SDR2D}{\scriptsize{R1-SDR2D 1000}}

\psfrag{R2AFMF+CCCP best}{\scriptsize{R2-Max-Min-CCCP H}}
\psfrag{R2AFMF+CCCP worst}{\scriptsize{R1-Max-Min-CCCP L}}
\psfrag{R2AFMF+SDR2D+LB}{\scriptsize{SDR2D-UB}}
\psfrag{R2AFMF+SDR2D 1000}{\scriptsize{R2-SDR2D 1000}}
\psfrag{R1AFMF+SDR2D 1000}{\scriptsize{R1-SDR2D}}

\psfrag{direct path only}{\scriptsize{Direct communications without relays}}
\psfrag{XXXXXXXXXMinimal Rate (bits/time slots/Hz)}{\small{ ~~~~~~~
Minimum SNR (dB)}}
\psfrag{Overall Power PTmax}{\small{Number of iterations}}
\centering
\centerline{\epsfig{figure=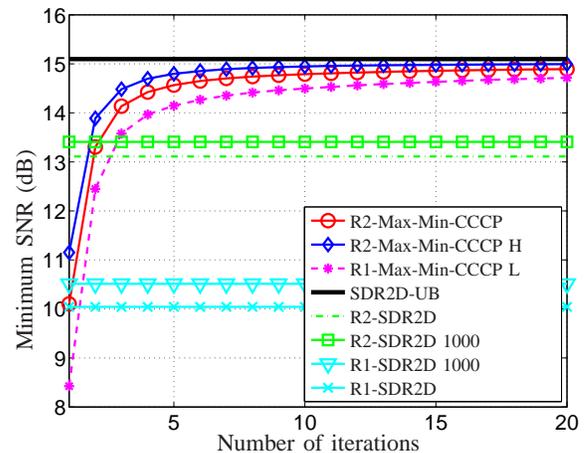,scale=0.42}}
\caption{Minimum SNR versus number of iterations, second example.}
\label{Fig:FirstExamplePlot3}
\end{figure}

\section{Conclusion}
In this paper, a novel Rank-2-AFMS for single-group multicasting
has been proposed. Our scheme generalizes the rank-one multicasting
scheme of the literature to a rank-two multicasting scheme and
allows to incorporate the direct channel from the source to the destinations in
the detection. To select
the proper source power and to adjust the relay weights we
propose an iterative algorithm to maximize the lowest SNR at the destinations.
The simulation results demonstrate
that the proposed Rank-2-AFMS combined with the proposed
iterative algorithm yields  a performance close to the theoretical
upper bound.

\section{Acknowledgement}
This work was supported by the Seventh Framework Programme for Research of the European Commission under grant number ADEL-619647.

\end{document}